\documentclass[11pt]{article}
\usepackage[T1]{fontenc}

\usepackage{tgheros}
\usepackage{amsmath,amsfonts,amsthm,mathtools}
\usepackage{mathtools}
\usepackage{fullpage}
\usepackage{thmtools,thm-restate}
\usepackage[linesnumbered,noend,ruled,noline]{algorithm2e}
\usepackage{setspace}
\usepackage{tikz}

\SetAlFnt{\small}
\SetAlCapFnt{\small}
\SetAlCapNameFnt{\small}
\SetAlCapHSkip{0pt}
\IncMargin{-\parindent}

\usepackage{graphicx}
\usepackage{amsthm}
\usepackage{hyperref}
\usepackage{cleveref}
\usepackage[numbers]{natbib}
\setcitestyle{acmnumeric}
\DontPrintSemicolon

\SetCommentSty{mycommfont}
\SetArgSty{textnormal}
\usepackage{soul}
\usepackage[title]{appendix}
\usepackage{apxproof}
\usepackage{enumitem}
\usepackage{verbatim}
\usepackage{authblk}

% For STOC we need 1 inch margins  
\usepackage[margin=1in]{geometry} 

% STOC: typeset using 11-point or larger fonts, in a single-column, single-space (between lines) format with ample spacing throughout and 1-inch margins all around, on letter-size (8 ½ x 11 inch) paper.

\newtheorem{theorem}{Theorem}
\newtheorem{lemma}[theorem]{Lemma}
\newtheorem*{lemma*}{Lemma}
\newtheorem*{theorem*}{Theorem}

\newtheorem{claim}[theorem]{Claim}
\newtheorem{remark}[theorem]{Remark}

% \PassOptionsToPackage{hyphens}{url}\usepackage{hyperref}
% \hypersetup{
%            breaklinks=true,   % splits links across lines
%            colorlinks=false,   % displays links as colored text instead of blocks
%            pdfusetitle=false,  % \title and \author values into pdf metadata
%                               % etc.
%         }

%%%%%TODO - remove before submission %%%%%%%%
% \usepackage[colorinlistoftodos,textsize=tiny,textwidth=2cm,color=red!25!white,obeyFinal]{todonotes}
% \newcommand{\note}[1]{\todo[color=green!25!white]{ {#1} }}
%   \newcommand{\tec}
%   [1]{{\color{magenta} [Tomer: #1]}} % comment
% \newcommand{\anote}[1]{{\color{orange}  [Ariel: #1]}} % edit
% \newcommand{\dnote}[1]{{\color{blue} (Dan: #1 )}}

\newcommand{\vcg}[2]{VCG({#1},{#2})}
\newcommand{\naturals}{\mathbb{N}}
\newcommand{\E}{\mbox{\bf E}}
\title{Multi-Parameter Mechanisms for Consumer Surplus Maximization\thanks{This material is based upon work supported by the National Science Foundation under Grant No. DMS-1928930 and by the Alfred P. Sloan Foundation under grant G-2021-16778, while all three authors were in residence at the Simons Laufer Mathematical Sciences Institute (formerly MSRI) in Berkeley, California, during the Fall 2023 semester. T. Ezra is also supported by the Harvard University Center of Mathematical Sciences and Applications.}}
\date{}
\begin{document}
% \author{Submission \#130}
\author[a]{Tomer Ezra\thanks{tomer@cmsa.fas.harvard.edu}}
\author[b]{Daniel Schoepflin\thanks{ds2196@dimacs.rutgers.edu}}
\author[c]{Ariel Shaulker\thanks{ariel.shaulker@weizmann.ac.il}}
\affil[a]{Harvard University}
\affil[b]{Rutgers University -- DIMACS}
\affil[c]{Weizmann Institute of Science}
\date{} 
\setcounter{Maxaffil}{0}
\renewcommand\Affilfont{\itshape\small}

\maketitle
\begin{abstract}
    We consider the problem of designing auctions that maximize \emph{consumer surplus} (i.e., the social welfare minus the payments charged to the buyers). In the consumer surplus maximization problem, a seller with a set of goods faces a set of strategic buyers with private values, each of whom aims to maximize their own individual utility.  The seller, in contrast, aims to allocate the goods in a way that maximizes the \emph{total buyer utility}.  The seller must then elicit the values of the buyers in order to decide what goods to award each buyer. The canonical approach in mechanism design to ensure truthful reporting of the private information is to find appropriate prices to charge each buyer in order to align their objective with the objective of the seller.  Indeed, there are many celebrated results to this end when the seller's objective is \emph{welfare maximization} 
\cite{vickrey1961,clarke1971,groves1973}
or \emph{revenue maximization} 
\cite{myerson1981optimal}.  However, in the case of consumer surplus maximization the picture is less clear -- using high payments to ensure the highest value bidders are served necessarily decreases their surplus utility, but using low payments may lead the seller into serving lower value bidders.  

Our main result in this paper is a framework for designing mechanisms that maximize consumer surplus.  We instantiate our framework in a variety of canonical \emph{multi-parameter} auction settings (i.e., unit-demand bidders with heterogeneous items, multi-unit auctions, and auctions with divisible goods) and use it to design auctions achieving consumer surplus with tight approximation guarantees against the total social welfare.  Along the way, we resolve an open question posed by \citet{hartline2008optimal}
%, who, to our knowledge, were the first to study the question of consumer surplus approximation guarantees in \emph{single-parameter} settings, 
for the two bidder single item setting.

\end{abstract}
\newcommand{\sw}[2]{SW(#1,#2)}
\newcommand{\sws}[1]{SW(#1)}
\newcommand{\al}[2]{A(#1,#2)}
\newcommand{\ali}[3]{A_{#1}(#2,#3)}
\newcommand{\allocs}{\Delta}

\newcommand{\items}{\mathcal{I}}
\newcommand{\agents}{\mathcal{N}}
\newcommand{\bidders}{\agents}

\newcommand{\mech}{\mathcal{M}}
\newcommand{\opt}{\texttt{OPT}}
\newcommand{\vali}[1][i]{v_{#1}}
\newcommand{\payi}[1][i]{p_{#1}}
\newcommand{\pricej}[1][j]{p_{#1}}
\newcommand{\utili}[1][i]{u_{#1}}

\newcommand{\reals}{\mathbb{R}}

\newcommand{\support}[1]{\text{supp}(#1)}

\newcommand{\assumption}{no-superstar-item }

\makeatletter
\newcommand{\headline}[2]{%
  \vspace{0.08in}%
  \noindent \textbf{#1} #2%
  \vspace{0.08in}%
  \par\nobreak
  \@afterindentfalse
  \@afterheading
}
\makeatother

\thispagestyle{empty}
% \newpage}}
\newpage
\pagenumbering{arabic}

\section{Introduction}

The field of mechanism design in economics aims to design tools to direct the incentives of a group of self-interested agents toward an objective specified by a central planning mechanism designer.  The self-interested agents hold some private, objective-relevant information and the goal in mechanism design is, generally, to align the individual objectives of the self-interested agents with the central objective of the designer.  As a concrete example, in the canonical ``\emph{single-item auction} problem'' an
%mechanism designer, i.e., 
\emph{auctioneer} needs to allocate a single, indivisible good to one of several 
%agents, i.e., 
\emph{bidders}, each of whom has a \emph{private value} for receiving the good.  The auctioneer may wish to allocate the item to the bidder with the highest value, thereby maximizing the \emph{social welfare} (i.e., the total value) produced by the good, but to do so must elicit the private value information from each of the bidders.  Each bidder, on the other hand, can strategically (mis)report her private value to the auctioneer with the goal of maximizing her own \emph{utility} (i.e., her value for the outcome minus any costs she suffers).  There is, thus, a tension between the objective of the auctioneer and the objectives of the bidders and in order to achieve her goal, the auctioneer must \emph{incentivize} the bidders to report their value information accurately.

The use of monetary transfers is arguably the most ubiquitous approach in mechanism design for incentivizing truthful reporting of private information.  In our single-item auction setting, for example, the well-known second-price auction, which awards the item to the bidder with the highest reported value and charges this bidder a price equal to the second highest reported value, achieves exactly the goal of our auctioneer. Namely, all bidders are incentivized to report their true values to the auctioneer and the social welfare is maximized.  By using transfers of money from the agents to the designer, one is able to design mechanisms with similar properties for a broad class of allocation problems.  Indeed, when an auctioneer aims to maximize social welfare the celebrated Vickrey-Clarke-Groves (VCG) mechanism \cite{vickrey1961,clarke1971,groves1973} works  
%in \emph{totally general} allocation environments, where an auctioneer aims to allocate a set of goods and the bidders can have arbitrary valuations over the subsets of items.  
 in settings with multiple items even if the bidders have \emph{arbitrarily complex} valuations over subsets of items.
By carefully constructing appropriately large payments from the bidders to the auctioneer, the VCG mechanism ensures that bidders truthfully report all their private information and exactly maximizes the social welfare.

On the other hand, there are many scenarios in which a mechanism designer may wish to limit or reduce the payments she collects from the bidders.  First, in some settings monetary payments may be undesirable or even illegal and, as such, alternative forms of ``payment'' which offer no real revenue to the auctioneer are employed.  For instance, so-called ``ordeal mechanisms'' for allocating welfare benefits \cite{alatas2013ordeal,nichols1982targeting,zeckhauser2021strategic,sylvia2022ordeal} may utilize long applications or waiting time in a queue.  In these mechanisms time and effort stand in for money as a form of payment.  Second, in ``proof of work'' environments %(see, e.g., 
\cite{dwork1992pricing,jakobsson1999proofs},
%) 
such as the Bitcoin blockchain, agents all want access to some shared resource, e.g., the rights to publish the next block, and effort in finding a solution to a difficult computational problem is used to decide which agent(s) are awarded access.  This solution to the computational problem has no value and effort is merely used as ``payment'' to coordinate the agents.  
Third, in many allocation settings with computing resources the ``payments'' come in the form of reduced service quality.  For instance, packets may be delayed by servers \cite{cole2003much} or some resources may go unused \cite{cole2013mechanism}.  

In each of the aforementioned settings the payments have no direct benefits outside of ensuring good incentives and, as such, the payments from the bidders are effectively ``burned''.  
Therefore, a natural goal is to find mechanisms that: (i) still ensure truthful reporting, (ii) find outcomes of high welfare, and (iii) keep the total payments low.
Notably, however, although the VCG mechanism is truthful and maximizes the social welfare, satisfying the first two goals, the payments it specifies can be quite high and may leave, essentially, no residual \emph{consumer surplus} utility for the bidders themselves.  For instance, in the second-price auction when the second highest value is nearly the same as the first highest, the consumer surplus is nearly $0$.

Aiming to address this issue, \citet{hartline2008optimal} introduced the study of ``money-burning mechanisms'' --  mechanisms which may charge payments from the bidders but seek to maximize the residual consumer surplus.  Focusing on the single-item setting, they identify optimal Bayesian mechanisms (i.e., optimal mechanisms when bidders draw their values from distributions known to the auctioneer) for consumer surplus.  Then, they extend these results into worst-case prior-free settings by giving a truthful mechanism that always obtains consumer surplus within a $O(\log(n))$-factor of the total social welfare where $n$ is the number of bidders and demonstrate that this is the best possible.  Finally, they expand their results from single-item settings to single-parameter $m$-unit settings where an auctioneer has $m$ identical items and each bidder aims to receive one of these $m$ items.  

From the groundbreaking work of \citet{hartline2008optimal}, many natural questions arise.  In particular, they call out two specific questions, namely, (i) ``quantify[ing] the power of money burning'' in ``settings beyond i.i.d. distributions and $k$-unit [unit-demand] auctions''; and (ii) finding improved ``upper and lower bounds for ... money-burning mechanisms with a small number of agents'', where the ``best-possible approximation ratio ... even in the two agent, one unit special case'' remained unknown.  In this work, we tackle both these questions and (partially) resolve them.  We study three well-established \emph{multi-parameter} auction settings which all generalize the single-item setting studied in \cite{hartline2008optimal} and give mechanisms achieving asymptotically tight consumer surplus guarantees.  We then narrow our attention to the two-bidder single-item setting and identify exactly the optimal mechanisms therein.

\subsection{Our Contribution and Techniques}\label{sec:contribution}

Maximizing residual consumer surplus poses challenges that do not arise in welfare maximization, where the VCG mechanism can be applied directly. The difficulty lies in balancing prices: setting them too high reduces bidders’ utility, while setting them too low risks allocating items inefficiently.

To address this challenge, we introduce the \textit{VCG with copies} framework (Section~\ref{sec:vcg-copies}). The key idea behind this framework is to run the VCG mechanism on a modified instance instead of the original one. In this modified instance, multiple copies of each item are introduced, and agents' valuations are adjusted so that they do not strictly prefer allocations that were unavailable in the original setting. This approach effectively results in more balanced prices that are lower than those produced by VCG in the original instance. 

The framework consists of four steps. First, the number of copies, or a distribution over possible numbers of copies, is determined. Second, the VCG mechanism is run over the modified instance. Third, a rounding scheme takes the VCG allocation for the modified instance and turns it into an allocation for the original instance where we only have a single copy of each item. The final step adjusts the VCG payments proportionally to the rounded allocation. The third step of using a rounding scheme is nontrivial, as it must ensure that the final allocation remains close to the allocation in the modified instance while maintaining truthfulness.

We instantiate this framework across three multi-parameter auction settings, achieving tight approximation guarantees against the social welfare \footnote{All our lower bounds follow from the bound in \cite{hartline2008optimal}.}. The resulting mechanisms are prior-free, while the matching lower bounds hold even for Bayesian mechanisms\footnote{These lower bounds hold for instances where agent values are independently drawn from a common distribution, and mechanisms need only be Bayesian incentive compatible.}.

% \subsection{Multi-Unit Auctions with Submodular Bidders}

First, we analyze multi-unit auctions where bidders have submodular valuation functions. We show that applying our framework in this setting yields a tight logarithmic approximation to the optimal welfare (Theorem~\ref{thm:mu}).
% For multi-unit auctions, where bidders have submodular valuation functions, our framework yields a tight $O(\log n)$ approximation to welfare (Theorem~\ref{thm:mu}).

% \subsection{Unit-Demand Bidders}

Next, we consider unit-demand bidders. For adversarial heterogeneous items, we show that an $O(\log n)$ approximation to the welfare is achievable (Theorem~\ref{thm:ud}), and this bound is tight due to \cite{hartline2008optimal}.
In Bayesian settings, we refine our analysis of the mechanism's approximation guarantee. Specifically, if the distributions of the agents' valuations satisfy the \assumption assumption that assumes the probability of any item being the favorite item of some agent is at most $\frac{c}{m}$,
we obtain an improved approximation of $O(\log(1+\frac{n\cdot c}{m}))$ (Theorem~\ref{thm:udB}). This generalizes the results of both \cite{hartline2008optimal} and \cite{goldner2024simple}. In particular, \citet{hartline2008optimal} showed that for identical items there exists a mechanism whose residual surplus provides an $O\left(\log \left(1+\frac{n}{m}\right)\right)$ approximation to the welfare, and this bound is asymptotically tight. Later, \citet{goldner2024simple} considered the i.i.d. case, where they presented a simple mechanism obtaining the same approximation (of the welfare) of $O\left(\log \left(1+\frac{n}{m}\right)\right)$, which is tight. Notably, both are special cases of our \assumption assumption with $c = 1$. At the other extreme, when $c = m$ (corresponding to the adversarial heterogeneous items), we recover our bound for heterogeneous items. To devise our mechanism, we take a different approach from \citep{hartline2008optimal} and \citep{goldner2024simple}.  \citet{hartline2008optimal} utilize the single-parameter nature of the problem, while the mechanism of \citet{goldner2024simple} relies heavily on the symmetry induced by the i.i.d. assumption, where the truthfulness is not maintained once the i.i.d. assumption is relaxed. 
Figure~\ref{fig:unit-demand} summarizes this hierarchy of the complexity of different variants of the problem.

\begin{figure}[h]  % Position the figure here
    \centering
\begin{tikzpicture}
    % Define nodes with rectangular outlines
    \node[draw, rectangle, inner sep=5pt] (root) at (0,0) {$\substack{\text{\textbf{No-superstar}} \\ \Theta\left(\log(1 + \frac{n\cdot c}{m})\right) \\ \text{Theorem~\ref{thm:udB}}}$};
    \node[draw, rectangle, inner sep=5pt] (identical) at (-6,1.5) {$\substack{\text{\textbf{Identical items}} \\ \Theta\left(\log(1+\frac{n}{m})\right) \\  \text{\cite{hartline2008optimal}}}$};
    \node[draw, rectangle, inner sep=5pt] (iid) at (-6,-1.5) {$\substack{\text{\textbf{I.i.d. values}} \\ \Theta\left(\log(1+\frac{n}{m})\right) \\ \text{\cite{goldner2024simple}}}$};
    \node[draw, rectangle, inner sep=5pt] (adversarial) at (6,0) {$\substack{\text{\textbf{Adversarial}} \\ \Theta(\log n) \\ \text{Theorem~\ref{thm:ud}}}$};

    % Draw connecting edges
    \draw[-] (identical.east) -- (root.west);
    \draw[-] (iid.east) -- (root.west);
    \draw[-] (root.east) -- (adversarial.west);
\end{tikzpicture}

\caption{Hierarchy of the complexity of different variants for unit-demand bidders.}
% \caption{Progression of results for unit-demand bidders, starting with the assumption of identical items \cite{hartline2008optimal} and progressively relaxing these assumptions. At the other end of the spectrum, no assumptions are made, corresponding to the heterogeneous items case. The gap in the approximation guarantee between $\Theta(1 + \log \frac{n}{m})$ and $\Theta(\log n)$ is bridged by the \assumption assumption, yielding the approximation guarantee of $\Theta(\log(2c + \frac{cn}{m}))$.}
\label{fig:unit-demand}
\end{figure}

% \subsection{Divisible Goods}

For the last instantiation, we consider the divisible goods case (Section~\ref{sec:divisible}), where we adapt the VCG with copies framework to the divisible goods setting by proposing the restricted capacity VCG mechanism. Rather than randomly selecting a number of copies of each item, the restricted capacity VCG mechanism randomly chooses a ``capacity'' for each item, i.e., a maximum amount of each item that each agent may receive. The mechanism then runs VCG on a version of the input modified by this capacity constraint. Observe that this mechanism is, in a sense, simpler than VCG with copies, since it circumvents the need for conversion of the allocations output by VCG since they are all feasible on the initial instance (since items are divisible).
In this case, we consider buyers with concave valuation functions\footnote{A valuation function $v_i$ is concave if for any two allocation vectors $\Vec{x}$ and $\Vec{y}$ and any $\lambda \in (0,1)$ we have that $\lambda v_i(\Vec{x}) + (1-\lambda) v_i(\Vec{y}) \leq v_i(\lambda \Vec{x} + (1-\lambda)\Vec{y})$.}, and show a restricted capacity VCG mechanism whose residual surplus is a $O(\log n)$ approximation to the social welfare (Theorem~\ref{thm:divisible}).  
We remark that this result is indeed tight. This is due to the lower bound on the performance of any randomized allocation mechanism for a single indivisible item shown in \cite{hartline2008optimal}.  Since fractional and randomized allocations coincide for the single-item case (when bidders are expected utility maximizers), we obtain a $\Omega(\log{n})$ lower bound on the performance of any mechanism in our multiple heterogeneous divisible goods setting.

% \subsection{Further Discussion}
In this paper, we also study the special case of two agents (Section~\ref{sec:two-agents}). 
First, we show that the case of a single-item is as hard as any number of items when agents have arbitrary monotone valuations; in both cases an approximation of $\frac{3}{2}$ is optimal. This result is consistent with our previously discussed results which show that the surplus guarantees achievable in several well-studied multi-parameter auction settings with many agents are no worse than the surplus guarantees achievable in the canonical single-item setting with many agents.
Then, in Section~\ref{sec:alt-benchmark}, we answer an open question that was raised by \citet{hartline2008optimal}, regarding the approximation ratio with respect to a different benchmark that is not the unattainable first-best.

\subsection{Additional Related Work}
In addition to \cite{hartline2008optimal}, the works perhaps closest to ours are \cite{fotakis2016efficient,goldner2024simple}.  \citet{fotakis2016efficient} study maximizing consumer surplus in fully general domains, but obtain only a $O(\log |\mathcal{O}|)$ approximation guarantee where $\mathcal{O}$ is the set of feasible outcomes.  Since $|\mathcal{O}|$ can be exponential in the number of items in our indivisible goods settings (and, essentially infinite in our divisible good settings), while their mechanisms apply in our settings, they guarantee only a trivial fraction of the social welfare as consumer surplus (which can be achieved simply by following \cite{hartline2008optimal} on the grand bundle of items).  
Seeking to improve upon the guarantees (in terms of bidders and items) of \cite{fotakis2016efficient}, \citet{goldner2024simple} initiated the study of \emph{sublinear} approximations in multi-dimensional settings.  Independently of our work, they studied multi-dimensional surplus maximization with unit-demand bidders in Bayesian settings as we do in Section \ref{sec:unit-demand}.  For the case where bidders' values are distributed i.i.d., they devise a \emph{Bayesian incentive compatible} (rather than truthful in expectation) auction.
In Section \ref{sec:unit-demand}, we show that our mechanism achieves the same asymptotic guarantee as \cite{goldner2024simple} under a weaker assumption where the probability of each item being the favorite of some agent is bounded by $\frac{c}{m}$ (which generalizes both the i.i.d. case considered in \cite{goldner2024simple} and the identical items unit-demand case considered in \cite{hartline2008optimal} when setting $c=1$).

Also quite similar to our work is that of \cite{qiao2023online,ganesh2023combinatorial}. \citet{qiao2023online} propose the problem of ``pen-testing''.  In the pen-testing problem, an algorithm faces a set of $n$ pens each with an unknown amount of ink.  The goal of the algorithm is to ``test'' these pens, possibly exhausting some of the ink contained within, in order to select a pen with a large amount of \emph{remaining} ink.  As \citet{qiao2023online} and \citet{ganesh2023combinatorial} argue, one can interpret this problem as a consumer surplus maximization problem where the pens are the bidders.  \citet{qiao2023online} then give algorithms for selecting a single pen in online pen-testing settings (where pens arrive one-by-one and the algorithm must make irrevocable decisions).  \citet{ganesh2023combinatorial} consider the problem of selecting multiple pens in both online and offline environments where a feasibility constraint restricts which subsets of pens can be simultaneously selected.  Notably, these works only apply in  \emph{binary-decision, single-parameter} mechanism design settings, since choosing or not choosing a pen corresponds to a binary service decision for the bidders (i.e., each bidder has a \emph{single} private value for receiving a service and either receives the service or not), whereas the problems we investigate are all multi-parameter auction settings (i.e., bidders hold multiple pieces of private information).

Similar in spirit to our question of mechanism design with money-burning is the literature on mechanism design \emph{without} money.  Mechanism design without money is a rich literature (see \cite{schummer2007mechanism} for a survey of some results) seeking to design mechanisms that align the incentives of the agents with the designer while avoiding monetary payments altogether. Indeed, one possible approach to the questions we explore in this work is to find a welfare maximizing mechanism without money.  However, it is known from prior work, e.g., \cite{hartline2008optimal}, that mechanisms without money cannot guarantee better than an $\Omega(n)$ approximation to the optimal social welfare, even in the case of allocating a single good.  As such, we seek mechanisms that leverage (small) monetary transfers in order to obtain better consumer surplus guarantees.

The objective in our work also has some similarities with the work on frugal mechanism design (see, e.g., \cite{archer2007frugal,chen2010frugal,talwar2003price,karlin2005beyond}), wherein an auctioneer seeks to acquire services from a set of strategic provider bidders and the auctioneer aims to minimize her total expenditure.  In both frameworks, the auctioneer aims to minimize the costs (i.e., the monetary transfers), but the algorithmic approaches must differ since frugal mechanism design concerns procurement (i.e., reverse) auctions whereas we study allocation (i.e., forward) auction problems.

Another line of work of \citet{guo2009worst}, \citet{guo2010optimal}, and \citet{moulin2009almost} aims to minimize the payments collected by the auctioneer in the VCG auction via redistributions, i.e., the payments collected by the auctioneer are then redistributed to the participating bidders.  Our mechanisms, by contrast, do not permit any transfer of money from the auctioneer to the participants (nor between the participants directly) and, thus, their approaches are insufficient for our purposes.  An interesting question is under what conditions one can obtain improved consumer surplus guarantees over the guarantees we achieve in this paper by allowing for money to be exchanged between bidders.

Another work that resembles our approach considers mechanisms with modified supplies of goods. In particular, \citet{roughgarden2012supply} design mechanisms that restrict the capacities of goods, whereas our approach takes the opposite route: we deliberately allocate additional goods and then apply rounding techniques to prevent over-allocation. While the former has proven useful for revenue maximization, we demonstrate that the latter is effective for maximizing surplus.

Finally, money burning sees practical application in the use of transaction fee mechanisms in blockchain (see, e.g., \cite{roughgarden2021transaction,chung2023foundations,chen2023bayesian,wu2023maximizing}).  Since in a blockchain protocol only a limited number of transactions can be posted to the chain per block, transaction fee mechanisms are used to decide which transactions are ultimately processed. 
In these mechanisms, some of the payments are burnt and are removed from the system altogether, rather than transferred to the miner who is processing the block.  In contrast to our problem, however, payments are burnt in transaction fee mechanisms in order to incentivize the miners (who are also assumed to be strategic agents in the model) to follow the blockchain protocol.  As such, the literature on transaction fee mechanism design centers, generally, only around designing mechanisms with good incentive properties (and can allow payments to be quite high), whereas in this work we aim to satisfy a simpler set of incentives while optimizing the more challenging objective of consumer surplus.

\section{Indivisible Goods}\label{sec:indivisible}
\subsection{Model}\label{sec:indivisble-model}
In this section, we consider settings where the mechanism designer wants to allocate a set $\items$ of  $m$ indivisible items to a set $\agents$ of $n$ agents. Each agent is associated with a valuation function $v_i:2^\items \rightarrow \reals_{\geq 0}$. We assume that valuation functions are monotone (i.e., for all $S\subseteq T \subseteq \items$, $v_i(S) \leq v_i(T)$) and normalized (i.e., $v_i(\emptyset)=0$).

The mechanism designer needs to design a truthful mechanism $\mech$ that receives the valuation functions of the agents $v_1,\ldots, v_n$ and returns an allocation  $A=(A_1,\ldots,A_n)\in \allocs$, where $\allocs$ is the set of all possible allocations of $\items$ and a payment vector  $ p=(p_1,\ldots,p_n) \in \reals_{\geq 0 }^n$, where the mechanism might use randomness, in which case, $A$ and $p$ are random variables returned by the mechanism. The mechanism designer's goal is to maximize the expected residual surplus, which is $\sum_{i\in\agents} v_i(A_i) - p_i$. 

We require that the designed mechanism satisfies truthfulness in expectation (TIE) and ex-post individual rationality (EPIR):
\begin{align*} &
\forall i,v_i,\tilde{v}_i , v_{-i}:~ \E_{(A,p) \sim \mech(v_i,v_{-i})}[v_i(A_i) - p_i] \geq 
\E_{(A,p) \sim \mech( \tilde{v}_i,v_{-i})}[v_i(A_i) - p_i]  & (TIE)
\\  &\forall i, v_i, , v_{-i}, (A,p) \in \support{\mech(v_1,\ldots,v_n)}:~ v_i(A_i) - p_i \geq 0 &
  (EPIR)
\end{align*}

We measure the performance of the mechanism using the first-best benchmark which is the maximal possible social welfare, which we denote by $\sw{\agents}{\items} = \max_{A\in \Delta} \sum_{i \in \agents}v_i(A_i)$. % where the maximum is taken over all possible allocations.
We say that a mechanism guarantees an $\alpha$-approximation (for $\alpha \geq 1$) of the optimal welfare as residual surplus if $$\forall{v_1,\ldots,v_n}:~ \E_{(A,p)\sim \mech(v_1,\ldots,v_n)} \left[\sum_{i\in\agents} v_i(A_i) - p_i\right] \geq \frac{\sw{\agents}{\items}}{\alpha}.$$

We consider several types of valuation functions:
\begin{itemize}
\item Valuation function $v_i:2^\items \rightarrow \reals_{\geq 0}$ is \textit{unit-demand} if for every non-empty set $S$, $v_i(S) = \max_{j \in S } v_i(\{j\})$.
\item Valuation function $v_i:2^\items \rightarrow \reals_{\geq 0}$ is \textit{gross-substitutes} if for every two vectors of item prices $p,p'\in \reals_{\geq 0}^m$, for which $p_j\leq p'_j$ for all $j \in \items$, and a set  $S \in \arg\max_{T \subseteq \items } v_i(T) - \sum_{j \in T} p(j)$, there exists a set $S' \in \arg\max_{T \subseteq \items} v_i(T) - \sum_{j \in T} p'(j)$ for which for all items $j\in S$ with $p_j=p'_j$ it holds that $j\in S'$.
\item Valuation function $v_i:2^\items \rightarrow \reals_{\geq 0}$ is \textit{submodular} if for every two sets $S,T\subseteq \items$ it holds that $v_i(S) +v_i(T) \geq v_i(S\cap T) + v_i(S \cup T)$.
\item Valuation function $v_i:2^\items \rightarrow \reals_{\geq 0}$ is a \textit{multi-unit} function if for every two sets $S,T\subseteq \items$ of the same size it holds that $v_i(S) =v_i(T)$.
\end{itemize}
A valuation function can belong simultaneously to multiple categories, and it is known that the class of unit-demand functions is a strict subset of the class of gross-substitutes functions which is a strict subset of the class of submodular functions \cite{LehmannLN06}. Among multi-unit functions the classes of gross-substitutes and submodular are the same \cite{LehmannLN06, GUL1999}.

\subsection{The VCG Mechanism with Copies}\label{sec:vcg-copies}
In this section, we introduce the VCG with copies framework. We then show how to use this framework and devise multiple truthful mechanisms with optimal guarantees for various settings. 
We begin by defining the key components of the framework.

\paragraph{Instance with Copies.} Consider an instance with a set $\items$ of items and a set $\agents$ of $n$ agents with valuation functions $v_1, \dots, v_n$. We define an instance with $2^\ell$ copies of each item in $\items$, yielding an item set $\items' =  \items \times [2^\ell]$. We define a projection function $g:2^{\items'} \to 2^\items$  that maps a subset of copied items back to their original items: $\items$, $g(S) = \{j \mid \exists k\in[2^\ell]: (j,k)\in S \}$. We define the \textit{extension} of $v:2^\items \rightarrow\reals_{\geq 0}$ to an instance with $2^\ell$ copies by   $v': 2^{\items'} \to \reals_{\geq 0}$ where $v'(S)=v(g(S))$.
Throughout the paper, we denote the valuations, sets, and payments of an instance with copies using a prime (i.e., $'$)—while the original instance’s valuations, sets, and payments are denoted without a prime.

\paragraph{VCG mechanism.}Our mechanisms use as a subroutine the VCG mechanism \cite{clarke1971, groves1973, vickrey1961}, which for a specific instance selects an arbitrary allocation that maximizes the social welfare, and each agent pays its negative externality to the other agents. That is, the allocation returned by the VCG mechanism $A$ is in $\arg\max_{A \in \Delta} \sw{\agents}{\items}$, and for all $i$, $p_i = \sw{\agents \setminus \{i\}}{\items} - \sw{\agents \setminus\{i\} }{\items \setminus A_i} $.

\paragraph{Rounding Scheme.}A key component of the mechanism we devise is a rounding scheme.
A rounding scheme takes as input $(A', \ell)$, where $A'$ is an allocation of an instance with $2^\ell$ copies of each item.
The rounding scheme (randomly) outputs an allocation $A$ for the original instance (without the copies).

An allocation $A'=(A'_1,\ldots,A_n')$ is \textit{non-redundant with respect to class of valuations $\mathcal{C}$} if there is no $S_i'$  which is a strict subset of $A'_i $ such that $v(S_i') = v(A'_i)$ for each $v\in \mathcal{C}$.

We say that a rounding scheme $B$ is \textit{$q$-valid} with respect to a class $\mathcal{C}$ of valuation functions if for every $\ell \geq 0$ and a non-redundant allocation $A'$ of $2^\ell$ copies of the items with respect to $\mathcal{C}$, $B$ allocates agent $i$  a (random) set $A_i$ such that 
for every $v\in \mathcal{C}$ it holds that 
\begin{equation}
\E[v(A_i)] = \frac{q}{2^\ell}\cdot v'(A'_i), \label{eq:q-valid}
\end{equation}
where $v'$ is the extension of $v$ to $2^\ell$ copies and the expectation is over the randomness of the rounding scheme.

The main challenge is to show how to devise a valid rounding scheme with constant $q$ for certain classes of valuations.
With these components in place, we can now formally define our mechanism.

\begin{algorithm}
\DontPrintSemicolon
\LinesNumbered
\KwIn{An item set $\items$, valuation profile $v_1,\ldots,v_n$,  parameter $r\in \naturals$, and a rounding scheme $B$}
\KwOut{An allocation $A=(A_1,\ldots,A_n)$ of $\items$ and a payment vector $p=(p_1,\ldots,p_n)$}
Draw $\ell$ uniformly at random from $\{0,\ldots,r\}$ \\
Let $\items' = \items \times [2^\ell]$ \tcp*{Create $2^\ell$ copies of each item} %\\
%Let $g:2^{\items'} \rightarrow 2^\items$ be the function where $g(S) = \{j \mid \exists k\in[2^\ell]: (j,k)\in S \}$ \\
Let $v_i':2^{\items'} \rightarrow \reals_{\geq 0}$ be the extension of $v_i$ to $2^\ell$ copies \\
Calculate $(A',p') = \vcg{\items'}{v_1',\ldots,v_n'}$ \label{step:vcg} \\
Use the rounding scheme $B$ on $(A',\ell)$ to create an allocation $A$ \label{step:sub}\\
For every agent $i \in \agents$, set $p_i = p'_i\cdot\frac{v_i(A_i)}{v'_i(A'_i)}$ if $v'_i(A'_i) > 0$ and $p_i = 0$ otherwise. \label{step:payments}
% \\ \tcp*{$B$ guarantees that each agent $i$ receives set $g(A_i')$ at price $p'_i$ with probability  $\frac{q}{2^\ell}$ }%that give each agent $i$ the set $g(A_i')$ at price $p'_i$ with probability  $\frac{q}{2^\ell}$ 
\SetAlgoRefName{1}
\caption{VCG with copies}
\label{algo:vcg-copies}
\end{algorithm}
Mechanism~\ref{algo:vcg-copies} basically creates $2^\ell$ copies of all the items, and extends the valuation functions of the agents to the instance with the copies. Then, the mechanism calculates the VCG allocation and price vector for the instance with the copies.  
We assume that in Step~\ref{step:vcg}, the calculated allocation is non-redundant with respect to the class of valuations we consider.
%Since in the real instance, there is only one copy of each item, in order to create a valid allocation, the mechanism uses the randomized subroutine $B$ to generate an allocation and payments, where each agent receives the equivalent set and pays the same as in the allocation returned by Step~\ref{step:vcg} with a probability of $\frac{q}{2^\ell}$ (and otherwise, the agent receives nothing and pays nothing). 
Since in our actual instance, there is only a single copy of each item, we utilize a $q$-valid rounding scheme $B$ to generate an allocation for the original instance. The rounding scheme guarantees that the expected value that each agent receives is (in expectation) proportional to the value in the allocation returned by Step~\ref{step:vcg}. Lastly, in Step~\ref{step:payments}, we scale for each agent the payment returned by the VCG by the fraction of value allocated to this agent over the value returned by the VCG.

In the remainder of this section, we show that given a $q$-valid rounding scheme, our mechanism is TIE and EPIR (Claim~\ref{claim:epir-tie}).
We then devise $q$-valid rounding schemes for unit-demand valuations with $q = 1$ and for multi-unit submodular valuations with $q = 1/2$.
We then present residual surplus guarantees for the class of gross-substitutes valuations as a function of $q$ and $r$ (Lemma~\ref{lem:welfare-guarantee}).
Combining everything, we obtain an $O(\log(n))$-approximation.

The key challenges for establishing guarantees using our framework are: 1) Devising $q$-valid rounding schemes with large values of $q$ which we show is possible for unit-demand (or even additive subject to matroid constraints, see discussion in Section~\ref{sec:discussion}) and multi-unit submodular valuations. 2) Comparing the payments and utilities of the outcomes of the VCG mechanism for different amounts of copies. In particular, our key lemma is for the case of gross-substitutes valuations where we show that the additional contribution to the welfare by adding a copy of all items is at least half of the sum of payments of the VCG mechanism on the instance with the additional copies (Claim~\ref{cl:payment_bounded_by_SW_diff}).

We first claim that as long as the rounding scheme $B$ is $q$-valid  then  Mechanism~\ref{algo:vcg-copies} is truthful in expectation and ex-post individual rational no matter how the rounding scheme is implemented.
\begin{claim} \label{claim:epir-tie}
    If Mechanism~\ref{algo:vcg-copies} receives a $q$-valid rounding scheme $B$ with respect to a class $\mathcal{C}$ of valuation functions, then it is TIE and EPIR with respect to $\mathcal{C}$.
\end{claim}
\begin{proof}
    To show EPIR, we observe that since the VCG mechanism is EPIR we have $p'_i \leq v'_i(A'_i)$. Using this, if $v'_i(A'_i)$ is $0$, $i$ pays nothing.  Otherwise, the payment of agent $i$ for a set $A_i$ of items is $p_i=p'_i\cdot\frac{v_i(A_i)}{v'_i(A'_i)}\leq v_i(A_i)$, meaning that the agent pays no more than his value for the set of items.
    % the outcome for agent $i$ is either the emptyset in which case the agent pays nothing which satisfies EPIR, or the agent receives a utility (value minus payment) of $v_i(A_i)-p_i= v_i(g(A_i'))-p_i' = v_i'(A_i')-p_i' \geq 0$, where the inequality is by individual rationality of the VCG mechanism. 
    To show TIE, we fix a value of $\ell$. For a $q$-valid rounding scheme, the expected value of an agent $i$ is $\frac{q}{2^\ell}\times v_i(g(A'_i))$, where $A'_i$ is the allocation to agent $i$ in the VCG mechanism with $2^\ell$ copies and $g(A'_i)$ is the projection of those items to the actual instance. By Step~\ref{step:payments} and $B$ being a $q$-valid rounding scheme, the expected payment of agent $i$ is the same as in the VCG mechanism with $2^\ell$ copies of each item, that is $p'_i$.
    Observe that two statements imply that for every agent his expected utility is $\frac{q}{2^\ell}$ times his utility in the instance with $2^\ell$ copies.
    Since the VCG mechanism is truthful, reporting $v_i$ maximizes the agent's expected utility.
    % since the rounding scheme does not influence the utility of the agent
    % possible sets that the agent may receive, the payments or the probability of being allocated, it is sufficient to show that reporting $v_i$ always maximizes the value of $v_i(g(A_i')) - p_i'$ for every realization of $\ell$, and any report of $v_{-i}$. 
    % Let  $\bar{v}_i$ be an arbitrary report, and let $(\bar{A}_i',\bar{p}_i')$ be the set allocated to agent $i$ and his payment when misreporting $\bar{v}_i$ instead of $v_i$, calculated in Step~\ref{step:vcg}. Thus, $$v_i(g(A_i'))-p_i'= v_i'(A_i') - p'_i \geq v_i'(\bar{A}_i') - \bar{p}_i'= v_i(g(\bar{A}_i'))-\bar{p}_i', $$ 
    % where the inequality is by the truthfulness of the VCG mechanism, and the two equalities hold since $v_i(S)=v_i'(g(S))$ for every set $S$.   This concludes the proof.
\end{proof}

We next show that for the cases of unit-demand and multi-unit valuations

\paragraph{Unit-demand valuations.}\label{unit-demand-subroutine} 
We construct a $1$-valid rounding scheme $B_{UD}$ for the class of unit-demand valuations, as follows:
For each item, independently select a random copy of it (among the $2^\ell$ copies). If this copy of the item is allocated to some agent, then allocate the item to this agent and charge accordingly. This is a $q$-valid rounding scheme since any non-redundant allocation returned at Step~\ref{step:vcg} with respect to the class of unit-demand valuations allocates at most one item per agent. An extension $v':2^{\items'}\rightarrow\reals_{\geq 0}$ of a unit demand function $v:2^\items\rightarrow \reals_{\geq 0} $ is unit-demand. Thus, the allocation $A'$ creates a partition of a subset of the agents according to the different items.
Therefore, allocating each item $j\in \items$ to each of agents a copy of it is allocated to with probability $2\ell$ independently is a $1$-valid rounding scheme for unit-demand valuations.

\paragraph{Multi-unit valuations.} We construct a $\frac{1}{2}$-valid rounding scheme $B_{MU}$ for the class of multi-unit valuations.
Given a non-redundant allocation $A'$ with respect to the class of multi-unit valuations, where each item has $2^\ell$ copies. We first create $n$ buckets of agents $B_1, \dots B_n$, where at the beginning each bucket contains a single agent. As long as there are two buckets, for which the number of items allocated to the agents within these buckets is at most $m$, we merge them into a single bucket. That is, for each bucket $B$, we maintain the condition: $|\bigcup_{i \in B} A'_i| \leq m$.
Once we stop merging, it must be that for every pair of remaining buckets, the total number of items allocated in every pair of buckets is at least $m+1$. I.e.,
 for any two different buckets $B,B'$, it holds that $|\bigcup_{i \in B\cup B'} A'_i| > m$
Since in the instance with the copies we have overall $m\cdot 2^{\ell}$ items, this implies that at the end of the process, at most $\left\lceil 2\frac{m\cdot 2^\ell}{m+1} \right\rceil \leq  2\cdot2^\ell$ buckets remain. 
Now, to implement a $\frac{1}{2}$-valid rounding scheme for multi-unit valuation functions we choose each bucket with probability $\frac{1}{2\cdot 2^\ell}$, and allocate only to agents that belong to this bucket the same amount of items as in $A'$. (If there are less than $2\cdot 2^\ell$ buckets, then with the remaining probability, we allocate nothing.) 
This allocation is feasible as in each bucket the sum of items allocated to the agents in this bucket is at most $m$. Moreover, each agent is allocated with probability $\frac{1}{2\cdot2^\ell}$, which satisfies Equation~\eqref{eq:q-valid}.

\begin{remark}
    We note that Mechanism~\ref{algo:vcg-copies} can be implemented in polynomial time for the two mentioned cases since both VCG and the subroutines can be implemented efficiently for the two cases. 
\end{remark}

\newcommand*{\xMin}{0}%
\newcommand*{\xMax}{6}%
\newcommand*{\yMin}{0}%
\newcommand*{\yMax}{5}%

\subsection{Analysis of the VCG Mechanism with Copies for Gross-Substitutes}\label{sec:vcg-analysis-gs}

We next present a welfare guarantee of our mechanism when applied to gross-substitutes valuation functions.  To do so, we abuse the notation and use $v_i$ instead of $v_i'$ for the instance with the copies. We can abuse this notation since for every set $S \subseteq \items'$, it holds that $v_i'(S) = v_i(g(S))$.  For a set of items $\items$, and a non-negative integer $c$, we denote by $c \cdot \items$ the superset that contains $c$ copies of each item of $\items$, and for two (super)sets $S_1, S_2,$ of items, we denote by $S_1 -  S_2$ the superset difference  where if an element appears $x_1$ times in $S_1$, and $x_2$ in $S_2$, it will appear  $\max(0,x_1-x_2)$ times in $S_1 -S_2$. For two supersets of items $S_1,S_2$, we denote by $S_1 + S_2$ the superset union of the two sets where if an element appears $x_1$ times in $S_1$ and $x_2$ times in $S_2$, then it will appear $x_1+x_2$ in $S_1+S_2$.
\begin{lemma}\label{lem:welfare-guarantee}
For a subclass $\mathcal{C}$ of gross-substitute valuation functions and a $q$-valid rounding scheme $B$ for the class $\mathcal{C}$, 
if Mechanism~\ref{algo:vcg-copies} receives valuations from $\mathcal{C}$, then the expected residual surplus of its output is at least  
$$\frac{q}{r+1}\left(\sw{\agents}{ \items}- \frac{\sw{\agents}{ 2^r\cdot \items}}{2^r}\right).$$ 
\end{lemma}
\begin{proof}
We first present a few useful properties of gross-substitutes valuations:
The first one is shown\footnote{They show a stronger claim, but the part in their claim regarding the modified cost function is equivalent to the following claim.} in \cite[Claims F.2.,~4.2,~F.1.,~F.3.,~F.4.,~F.5.]{BergerEFF23}:
\begin{claim}[\citep{BergerEFF23}]\label{cl:closure}
If $v_i$ is gross-substitutes (respectively, submodular, matroid rank function, coverage, XOS, subadditive), then so is $v_i'$.     
\end{claim}
The second property is shown in \cite{PAESLEME2017294}:
\begin{claim}[\cite{PAESLEME2017294}]\label{cl:convolution}
    Gross-substitutes functions are closed under convolution. I.e., for gross-substitutes $v_1,\ldots,v_n:2^\items \rightarrow \reals_{\geq 0}$, the function $v:\items\rightarrow \reals_{\geq 0}$ for which $v(S) = \max_{A\in \Delta(S)} \sum_{i} v_i(A_i)   $, where $\Delta(S)$ is the set of all allocations of the set of items $S$ to agent $1,\ldots,n$, then $v$ is also gross-substitutes. 
\end{claim}

We next define the following notation: For an agent $i\in \agents$,  and a set of items $\items'$ which consists of $k$ copies of the original set $\items$, let $p_i(\agents, \items')$ be the payment\footnote{We require that the payments are consistent with the implementation used in Step~\ref{step:vcg} of Mechanism~\ref{algo:vcg-copies}.} charged to agent $i$ in the VCG mechanism when applied to $\agents$ and $\items'$. We also denote by $p(\agents,\items') = \sum_{i\in \agents} p_i(\agents,\items')$.
Then, we first prove the following claim:

\begin{claim} \label{cl:payment_bounded_by_SW_diff}
    For agents with gross-substitutes valuations it holds that    $SW(\agents,2 \cdot \bar{\items})-SW(\agents, \bar{\items}) \geq \frac{p(\agents, 2 \cdot \bar{\items})}{2}$.
\end{claim}
\begin{proof}
    Let $\ali{i}{\agents}{2\cdot \bar{\items}}$ be the set allocated to agent $i$ by the VCG mechanism with the set of agents $\agents$ and $2$ copies of each item in $\bar{\items}$. 
    %For every item $j\in [m]$, denote by $i(j)$ an agent that gets the $j$'th item in $A$. 
    We have that
    \begin{align*}
        p_i(\agents, 2 \cdot \bar{\items}) & = \sw{\agents \setminus \{i\}}{ 2\cdot \bar{\items}}-\sw{\agents \setminus \{i\}}{ 2\cdot \bar{\items} - \ali{i}{\agents}{2\cdot \bar{\items}}} \\
        & = \sw{\agents\setminus \{i\}}{ 2\cdot\bar{\items}}-\sw{\agents}{ 2 \cdot\bar{\items}} +  v_{i}(\ali{i}{\agents}{2\cdot\bar{\items}}) \\
        & \leq \sw{\agents}{ 2\cdot \bar{\items} + \ali{i}{\agents}{2\cdot\bar{\items}}} - \sw{\agents}{ 2\cdot\bar{\items}} \\ 
         & \leq \sum_{j \in \ali{i}{\agents}{2\cdot\bar{\items}}} \left(\sw{\agents}{ 2\cdot\bar{\items} +\{j\}} - \sw{\agents}{ 2\cdot\bar{\items}} \right) ,
    \end{align*}
where the first equality is since the VCG payment of agent $i$ is equal to the decrease in the social welfare to the other agents by removing the set allocated to him, the second equality is due to the definition of $\ali{i}{\agents}{2\cdot\bar{\items}}$, the first inequality is due to the optimality of the social welfare of the VCG allocation, and the second inequality is since the social welfare function is gross-substitutes by Claim~\ref{cl:convolution} when applied on gross substitutes functions, and the valuation functions defined on the instance with the copies are gross substitutes by Claim~\ref{cl:closure}, and therefore it is submodular. 
    By summing over all $i$ we get that 
    \begin{equation}
        p(\agents,2\cdot \bar{\items}) = \sum_i  p_i(\agents, 2\cdot\bar{\items} ) \leq 2\sum_{j \in \bar{\items} } \left(\sw{\agents}{ 2\cdot\bar{\items} + \{j\}} - \sw{\agents}{ 2\cdot\bar{\items}} \right) \label{eq:p}
        \end{equation}

    On the other hand, if we order the elements in $\bar{\items}$ by $1,\ldots,\bar{m}$, then it holds that
    % \begin{align*}
    %     SW(N, 2^l \cdot [m]) - SW(N, 2^{l-1} \cdot [m]) & = \sum_{j=1}^m SW(N, 2^{l-1} \cdot [m] \cup 2^{l-1} \cdot [j]) - SW(N, 2^{l-1} \cdot [m] \cup 2^{l-1} \cdot [j-1])\\
    %     & \geq \sum_{j=1}^m SW(N, 2^l \cdot [m] \cup \{j\}) - SW(N, 2^l \cdot [m]).
    % \end{align*}
    \begin{eqnarray}
         \sw{\agents}{ 2\cdot\bar{\items}} - \sw{\agents}{ \bar{\items}} 
        & = & \sum_{j=1}^{\bar{m}} \left(\sw{\agents}{ \bar{\items} +[j]} - \sw{\agents}{ \bar{\items} + [j-1]}\right) \nonumber \\
        & \geq & \sum_{j=1}^{\bar{m}} \left(\sw{\agents}{ 2\cdot \bar{\items}+ \{j\}} - \sw{\agents}{ 2\cdot \bar{\items}} \right) \nonumber
        \\ 
        & \geq  &\frac{ p(\agents, 2\cdot  \bar{\items})}{2}, 
        \nonumber\label{eq:gap-sw}
    \end{eqnarray}
where the first equality is by telescoping sum, the first inequality is again due to the social welfare function being submodular, and the second inequality is by Inequality~\eqref{eq:p}. This concludes the proof. %The Lemma follows by combining Ineqaulities~\eqref{eq:p} and \eqref{eq:gap-sw}.
\end{proof}
We are now ready to prove Lemma~\ref{lem:welfare-guarantee}.
For the convenience of notation, for the rest of the proof of the lemma, let  $\sws{i}$ be the social welfare of the instance with $i$ copies of each item, let $\sws{ 2^{-1}} = 0$, and let $p(i)$ be the sum of payments of the agents in the instance with $i$ copies. Then,  the welfare of Mechanism~\ref{algo:vcg-copies} is:

\begin{align}
     \E\left[\sum_{i\in\agents} v_i(A_i) \right] 
   &= \frac{q}{r+1}\cdot \sum_{l=0}^r  \frac{\sws{ 2^l}}{2^l} \nonumber\\
    & =  \frac{q}{r+1}\cdot\sum_{l=0}^r \sum_{i=0}^l \frac{\sws{ 2^i}- \sws{ 2^{i-1}}}{2^l}    \nonumber \\
    & =   \frac{q}{r+1}\cdot \sum_{i=0}^r (\sws{ 2^i}- \sws{ 2^{i-1}})\cdot 2\left( \frac{1}{2^i}-\frac{1}{2^{r+1}}\right) \nonumber \\ 
    & =    \frac{q}{r+1}\cdot \left(\sws{1}\cdot \left(2 - \frac{1}{2^r}\right) +\sum_{i=1}^r (\sws{ 2^i}- \sws{ 2^{i-1}})\cdot 2\left( \frac{1}{2^i}-\frac{1}{2^{r+1}}\right)\right), \label{eq:sw1} 
\end{align}
where the second equality is by telescoping sum; the third equality is by geometric sum; and the   last equality is since $SW(2^{-1}) = 0$.
On the other hand, the expected payment of the mechanism is:
\begin{eqnarray} \label{eq:sw2}
    \E\left[\sum_{i\in\agents} p_i\right]  =  \frac{q}{r+1}\cdot \sum_{l=0}^r \frac{p( 2^l)}{2^l} \leq  \frac{q}{r+1}\cdot \left(\sws{1}+\sum_{l=1}^r \frac{\sws{2^l} - \sws{2^{l-1}}}{2^{l-1}} \right),
\end{eqnarray} 
where the inequality is by Claim~\ref{cl:payment_bounded_by_SW_diff} and since $p(1) \leq \sws{1}$.
By combining Inequalities~\eqref{eq:sw1} and \eqref{eq:sw2} we get that the expected residual surplus is at least:

\begin{eqnarray*}
\E\left[\sum_{i\in\agents} v_i(A_i) - p_i \right]           & \geq &  \frac{q}{r+1}\cdot \left(\sws{1}\cdot \left(1 - \frac{1}{2^r}\right) -\sum_{i=1}^r \frac{\sws{ 2^i}- \sws{ 2^{i-1}}}{2^r} \right) \\
%    &=   & \frac{q}{r+1}\cdot \left(\sws{ 1}-\frac{1}{2^r}\sum_{i=0}^r(\sws{ 2^i}- \sws{ 2^{i-1}}) \right)\\
     &=& \frac{q}{r+1}\cdot \left(\sws{ 1}- \frac{\sws{ 2^r}}{2^r}\right),
\end{eqnarray*}
which concludes the proof.
\end{proof}

We get as immediate corollaries of Lemma~\ref{lem:welfare-guarantee} the following guarantees of  Mechanism~\ref{algo:vcg-copies}:
\begin{theorem}\label{thm:ud}
When Mechanism~\ref{algo:vcg-copies} (implemented with rounding scheme $B_{UD}$) receives unit-demand valuations, the expected residual surplus of its output is at least $O\left(\log(n)\right)$ approximation to the social welfare.
\end{theorem}
\begin{theorem}\label{thm:mu}
When Mechanism~\ref{algo:vcg-copies} (implemented with rounding scheme $B_{MU}$) receives submodular multi-unit valuations, the expected residual surplus of its output is at least $O\left(\log(n)\right)$ approximation to the social welfare.
\end{theorem}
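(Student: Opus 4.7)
The plan is to invoke Lemma~\ref{lem:welfare-guarantee} as a black box and choose the parameter $r$ in Mechanism~\ref{algo:vcg-copies} so that the ``penalty term'' $\frac{\sws{2^r \cdot \items}}{2^r}$ becomes a small constant fraction of $\sws{\items}$. First I would observe that the lemma applies: the paper notes that on the class of multi-unit functions, the submodular and gross-substitutes classes coincide \cite{LehmannLN06, GUL1999}, so submodular multi-unit valuations are in particular gross-substitutes. Moreover, the multi-unit subroutine $B$ described just before the theorem statement implements the required distributional property with $q = \tfrac{1}{2}$. Thus Lemma~\ref{lem:welfare-guarantee} yields, for any $r$,
\begin{equation*}
\mathbf{E}\!\left[\sum_{i\in\agents} v_i(A_i)-p_i\right] \;\geq\; \frac{1}{2(r+1)}\left(\sws{\items} - \frac{\sws{2^r\cdot \items}}{2^r}\right).
\end{equation*}

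The key step is to show that because valuations are multi-unit, the welfare saturates as the number of copies grows, so $\sws{2^r\cdot\items}/2^r$ can be driven below, say, $\sws{\items}/2$ with only $r = O(\log n)$. Concretely, in the copies instance an agent's value depends only on the cardinality of the projected set $g(A_i')$, which is at most $m$. Hence for every allocation in the instance with $2^r$ copies, $v_i'(A_i') = v_i(g(A_i')) \leq v_i(\items)$, and summing gives
\begin{equation*}
\sws{2^r\cdot \items} \;\leq\; \sum_{i\in\agents} v_i(\items) \;\leq\; n \cdot \max_{i} v_i(\items) \;\leq\; n\cdot \sws{\items},
\end{equation*}
where the last inequality uses that allocating every item to a single best agent is a feasible allocation in the original instance.

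Choosing $r = \lceil \log_2 (2n) \rceil = O(\log n)$ then gives $\tfrac{n}{2^r} \leq \tfrac{1}{2}$ and therefore $\sws{2^r\cdot \items}/2^r \leq \sws{\items}/2$. Plugging into the lemma yields
\begin{equation*}
\mathbf{E}\!\left[\sum_{i\in\agents} v_i(A_i)-p_i\right] \;\geq\; \frac{1}{2(r+1)}\cdot \frac{\sws{\items}}{2} \;=\; \Omega\!\left(\frac{\sws{\items}}{\log n}\right),
\end{equation*}
which is the desired $O(\log n)$-approximation to the social welfare.

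I do not expect a real obstacle here beyond the saturation bound in the second paragraph: the heavy lifting (truthfulness, EPIR, and the telescoping/VCG-payment analysis) is already done by Lemma~\ref{lem:welfare-guarantee} and by the correctness of the multi-unit subroutine $B$. The only subtlety worth checking is that the non-redundancy assumption in Step~\ref{step:vcg} of Mechanism~\ref{algo:vcg-copies} is compatible with multi-unit valuations, which follows since any duplicate copy assigned to the same agent contributes zero marginal value and can be removed without decreasing welfare.
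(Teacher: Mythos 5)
Your proposal is correct and follows essentially the same route as the paper: invoke Lemma~\ref{lem:welfare-guarantee}, bound $\sws{2^r\cdot\items}$ by $\sum_i v_i(\items) \leq n\cdot\sws{\items}$, and pick $r = O(\log n)$ so that the penalty term is a constant fraction of $\sws{\items}$. The only cosmetic difference is the exact choice of $r$ (you use $\lceil\log_2(2n)\rceil$; the paper uses $2\log n$), which does not affect the asymptotics.
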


We remark that the lower bound of \citet{hartline2008optimal} immediately implies a lower bound of $\Omega(\log(n))$ for both unit-demand bidders and multi-unit submodular bidders, making the results of Theorem~\ref{thm:ud} and Theorem~\ref{thm:mu} asymptotically tight.

\begin{proof}[Proof of Theorems~\ref{thm:ud} and \ref{thm:mu}]
By setting $r=2\log(n)$ and $q\geq 1/2$, we get that the guarantee of Lemma~\ref{lem:welfare-guarantee} satisfies
\begin{eqnarray*}
    \frac{q}{r+1}\left(\sw{\agents}{ \items}- \frac{\sw{\agents}{ 2^r\cdot \items}}{2^r}\right)   & \geq  & \frac{q}{r+1}\left(\sw{\agents}{ \items}- \frac{\sum_{i\in \agents } v_i(\items)}{2^r}\right) \\  
    &\geq &   \frac{1}{4\log(n)+2}\left(\sw{\agents}{ \items}- \frac{n\cdot \sw{\agents}{ \items}}{2^{2\log(n)}}\right) \\ & = &\Omega\left(\frac{\sw{\agents}{ \items}}{\log(n)}\right),
\end{eqnarray*}
where the first inequality is since no agent can get more than his value for the entire set, and the second inequality is by plugging $r$ and $q$ and since allocating all the items to one agent is a valid allocation and therefore $v_i(\items) \leq \sw{\agents}{\items}$.
\end{proof}

\newcommand{\MostValuableAgentsMap}{\mu^{-1}_{v}}
\newcommand{\MostValuableAgents}[1]{\mu^{-1}_{v}(#1)}
\newcommand{\MostValuableItemMap}{\mu_{v}}
\newcommand{\MostValuableItem}[1]{\mu_{v}(#1)}
\newcommand{\copiesmoreagents}{\log\frac{n\cdot (\frac{2e}{e-1})}{m}}
\newcommand{\copiesmoreagentsTwo}{\frac{n\cdot (\frac{2e}{e-1})}{m}}
\newcommand{\copiesmoreitems}{(\frac{2e}{e-1})}
\newcommand{\sharedconst}{\frac{e}{e-1}}

\newcommand{\AllocRule}{a}

\subsection{Bayesian Instances }\label{sec:unit-demand}

In this section, we show an improved analysis of the performance of  Mechanism~\ref{algo:vcg-copies} under Bayesian settings. 
In particular, we consider a setting where the valuation $v_i:2^{\items} \rightarrow \reals_{\geq 0}$ of each agent $i\in \agents$ is drawn from an underlying distribution $F_i$ satisfying the following assumption which we call the \textit{\assumption}assumption:
For each agent $i$, the probability that an item $j$ is agent $i$'s most valuable item is at most $\frac{c}{m}$ for some constant $c\geq 1$.\footnote{If there are multiple items that have the highest value, then the probability is counted only for one of the items. Our mechanism does not use this assumption and can be implemented regardless; we only use this assumption for the surplus guarantee for which it is sufficient that there exists a tie-breaking among the items that satisfies the \assumption assumption.}  I.e., for all $j\in\items$ it holds that $\Pr[v_i(
\{j\}) = \max_{j'\in\items} v_i(\{j'\})] \leq \frac{c}{m}$.
We also assume that the agents' valuations are independent, and we denote by $\mathcal{F} = \mathcal{F}_1 \times \dots \times \mathcal{F}_n$ the product of the agents' distributions. In Bayesian settings, we are interested in maximizing the expected performance of a mechanism, and so we say that a mechanism $M$ guarantees an $\alpha$-approximation (for $\alpha \geq 1$) of the optimal welfare as residual surplus if: $$\E_{v \sim \mathcal{F}}\left[\E_{(A,p)\sim \mech(v_1,\ldots,v_n)} \left[\sum_{i\in\agents} v_i(A_i) - p_i\right]\right] \geq \frac{\E_{v \sim \mathcal{F}}[\sw{\agents}{\items}]}{\alpha}.$$

Our result in this section is that it is possible to extract 
a better fraction of the social welfare as residual surplus (Theorem~\ref{thm:udB}). The case of $c=1$ of this result generalizes the case of identical items studied in \cite{hartline2008optimal}, 
and the case of unit-demand agents where each agent's values are distributed identically and independently from the same distribution, which was studied in \cite{goldner2024simple} independently from our work.

\begin{theorem}\label{thm:udB}
    For a set $\items$ of $m$ heterogeneous items, $n$ unit-demand agents $\agents$ with valuations sampled from the distribution $\mathcal{F}$ that satisfies the \assumption assumption for some constant $c\geq 1$, there exists a mechanism whose expected residual surplus is an $O(\log(1+\frac{n\cdot c}{m}))$  approximation to the welfare as residual surplus.
\end{theorem}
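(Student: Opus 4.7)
The plan is to instantiate Mechanism~\ref{algo:vcg-copies} with the unit-demand subroutine (hence $q=1$, which is allowed since unit-demand valuations are gross-substitutes) and the choice
\[
r \;=\; \left\lceil \log_2\!\bigl(\tfrac{2e}{e-1}\cdot\max\{1,\,n/m\}\bigr)\right\rceil \;=\; O\bigl(\max\{1,\log(n/m)\}\bigr).
\]
Applying Lemma~\ref{lem:welfare-guarantee} pointwise in $v$ and taking expectations over $v\sim\mathcal{F}$ yields
\[
\E_{v\sim\mathcal{F}}\!\left[\sum_{i\in\agents}v_i(A_i)-p_i\right] \;\geq\; \frac{1}{r+1}\Bigl(\E_{v}[\sw{\agents}{\items}]\;-\;\tfrac{1}{2^r}\E_{v}[\sw{\agents}{2^r\!\cdot\items}]\Bigr),
\]
so it suffices to establish the inequality $\E_{v}[\sw{\agents}{2^r\!\cdot\items}]/2^r \le \tfrac12\,\E_{v}[\sw{\agents}{\items}]$, which immediately yields the $O(r+1)=O(\max\{1,\log(n/m)\})$ approximation.

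The upper bound is immediate. Writing $X_i := \max_{j\in\items}v_i(\{j\})$ and $B := \sum_i X_i$, any unit-demand agent contributes at most $X_i$ to the welfare no matter how many copies of each item are available, so $\sw{\agents}{2^r\!\cdot\items}\le B$ pointwise and thus $\E_{v}[\sw{\agents}{2^r\!\cdot\items}]\le\E_{v}[B]$.

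The core of the proof is a matching lower bound on $\E_{v}[\sw{\agents}{\items}]$. Let $J_i := \arg\max_{j}v_i(\{j\})$ and $S_j := \{i:J_i=j\}$. I would exhibit a feasible randomized matching by giving each non-empty item $j$ to a uniformly chosen agent $i^*_j\in S_j$; since each agent has a unique favorite item the sets $\{S_j\}$ are disjoint, so no agent is assigned more than one item, and when selected agent $i^*_j$ contributes exactly $X_{i^*_j}$ to the welfare. Re-indexing $\sum_j X_{i^*_j}\mathbf{1}[S_j\ne\emptyset]$ agent-by-agent and taking expectations gives
\[
\E_{v}[\sw{\agents}{\items}] \;\ge\; \sum_{i}\E_{v}\!\left[\tfrac{X_i}{|S_{J_i}|}\right] \;=\; \sum_{i}\E_{v}[X_i]\cdot \E_{v}\!\left[\tfrac{1}{|S_{J_i}|}\right].
\]
The factorization is supplied by the uniform highest-value assumption applied to the other $n-1$ agents: conditional on $J_i=j$, the count $|S_{J_i}|$ is distributed as $1+\mathrm{Bin}(n-1,1/m)$ \emph{regardless of the value of $j$ and of $v_i$}, so $|S_{J_i}|$ is marginally independent of $X_i$. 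Combining with the standard identity $\E[1/(1+\mathrm{Bin}(n-1,1/m))]=m(1-(1-1/m)^n)/n$ yields $\E_{v}[\sw{\agents}{\items}]\ge \tfrac{m(1-(1-1/m)^n)}{n}\,\E_{v}[B]$. Using $(1-1/m)^n\le e^{-n/m}$ and case-splitting on $n/m\le 1$ vs.\ $n/m\ge 1$ gives $\E_{v}[B]\le \tfrac{e}{e-1}\max\{1,n/m\}\cdot\E_{v}[\sw{\agents}{\items}]$; the choice of $r$ is precisely calibrated so that $\E_{v}[B]/2^r\le \tfrac12\E_{v}[\sw{\agents}{\items}]$, closing the argument.

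The main obstacle is isolating the correct distributional consequence of the uniform highest-value assumption: the assumption does \emph{not} imply $X_i\perp J_i$ (values conditional on different choices of argmax can have different distributions), so the factorization has to be routed through the fact that the conditional law of $|S_{J_i}|$ given $J_i$ does not depend on the value of $J_i$, rather than through any naive independence of $X_i$ and $J_i$. Once this point is pinned down, both the binomial expectation computation and the $n\!\le\!m$ vs.\ $n\!>\!m$ case split are routine.
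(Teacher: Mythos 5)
Your proposal is correct and is essentially the paper's own proof: same instantiation of Mechanism~\ref{algo:vcg-copies} (unit-demand subroutine, $q=1$, $r=\lceil\log_2(\tfrac{2e}{e-1}\max\{1,n/m\})\rceil$), same application of Lemma~\ref{lem:welfare-guarantee}, the same upper bound $\sw{\agents}{2^r\cdot\items}\le\sum_i v_i(\{\mu_v(i)\})$, and a rederivation of Lemma~\ref{shared-bound} via the same random-matching allocation rule, the same $1+\mathrm{Bin}(n-1,1/m)$ calculation, and the same $n\ge m$ vs.\ $n<m$ case split. Your closing paragraph on why the factorization must go through the conditional law of $|S_{J_i}|$ given $J_i$ (rather than any independence of $X_i$ and $J_i$) is a more explicit articulation of the independence step that the paper states somewhat tersely.
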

To prove this theorem, we first introduce some notation. 
Let $\MostValuableItemMap:\agents \to \items$ be a mapping from an agent to his most valuable item with respect to a valuation profile $v$, i.e.,   $\MostValuableItem{i} = \arg\max_{j\in \items} v_i(\{j\})$. If an agent $i$ has more than one item with maximal value, we define $\MostValuableItem{i}$ to be one of these items in a way that satisfies that $\Pr_{v_i \sim \mathcal{F}_i}[\MostValuableItem{i}= j]\leq \frac{c}{m}$, for every item $j \in \items$. Moreover, we assume that agent $i$ breaks ties consistently in an independent way from $v_{-i}$.
Similarly, let $\MostValuableAgentsMap:\items \to 2^\agents$ be a mapping from an item to the set of agents whose most valuable item is this item with respect to a valuation profile $v$, i.e., $\MostValuableAgents{j} = \{ i \mid \MostValuableItem{i}=j\}$.

% Sometimes we will use the size of the set $\mu^{-1}(j)$ which we denote by $\xi(j) = \mid \mu^{-1}(j) \mid $.  

% for every agent $i\in\agents$, his most valuable item  $\mu(i) = \arg\max_{j\in \items} v_i(\{j\})$, and for every item $j$, $\mu^{-1}(j) = \{ i \mid \mu(i)=j\}$, and $\xi(j) = \mid \mu^{-1}(j) \mid $.

The proof for Theorem \ref{thm:udB} heavily relies on the following lemma:

\begin{lemma}\label{shared-bound}
     For a set $\items$ of $m$ heterogeneous items, $n$ unit-demand agents $\agents$ with valuations sampled from the distribution $\mathcal{F}$ that satisfies the \assumption assumption for some constant $c\geq 1$, it holds that $$\E_{v\sim\mathcal{F}}[\sw{\agents}{\items}] \geq \frac{\E_{v\sim\mathcal{F}}[\sum_{i \in \agents } v_i(\{\MostValuableItem{i}\})]}{\frac{e}{e -1}\cdot \max\{{1,\frac{n\cdot c}{m}\}}}.$$
\end{lemma}

% We defer the proof of Lemma~\ref{shared-bound} to Appendix~\ref{app:unB}. Next, we show how to derive Theorem~\ref{thm:udB} from Lemma~\ref{shared-bound}.  

We first show how to derive Theorem~\ref{thm:udB} from Lemma~\ref{shared-bound}, and then we prove Lemma~\ref{shared-bound}.
\begin{proof}[Proof of Theorem~\ref{thm:udB}]
    We prove this theorem by considering Mechanism~\ref{algo:vcg-copies} with parameters $r= \lceil \log_2(\frac{2e}{e-1}\cdot   \max\{1, \frac{n\cdot c}{m}\})\rceil$, and rounding scheme $B_{UD}$ with $q=1$ (as defined in \nameref{unit-demand-subroutine}). Now, the expected residual surplus of this mechanism is:

\begin{eqnarray*}
& & \E_{v \sim \mathcal{F}}[\E_{(A,p) \sim \mech(v)}[\sum_{i \in \agents} v_i(A_i) - p_i]] \\ & \geq  & \frac{1}{r+1}\left(\E_{v\sim\mathcal{F}}[\sw{\agents}{ \items}]- \frac{\E_{v\sim\mathcal{F}}[\sw{\agents}{ 2^r\cdot \items}]}{2^r}\right)\\
& \geq & \frac{1}{r+1}\left(\E_{v\sim\mathcal{F}}[\sw{\agents}{ \items}]- \frac{\E_{v\sim\mathcal{F}}[\sum_{i \in \agents } v_i(\{\MostValuableItem{i}\})]}{2^r}\right) \\
& \geq & \frac{1}{r+1}\left(\E_{v\sim\mathcal{F}}[\sw{\agents}{ \items}]- \frac{(\sharedconst \cdot \max\{1, \frac{n\cdot c}{m}\})\cdot \E_{v\sim\mathcal{F}}[\sw{\agents}{ \items}]}{2^r}\right) \\
% & \geq & \frac{1}{r+1}\left(\E_{v\sim\mathcal{F}}[\sw{\agents}{ \items}](1- \frac{(\sharedconst\cdot \max\{1, \frac{n}{m}\})\cdot \E_{v\sim\mathcal{F}}[\sw{\agents}{ \items}]}{2^r}\right) \\
%& \geq & \frac{1}{r+1}\left(\E_{v\sim\mathcal{F}}[\sw{\agents}{ \items}]- \frac{(\sharedconst\cdot \max\{1, \frac{n}{m}\})\cdot \E_{v\sim\mathcal{F}}[\sw{\agents}{ \items}]}{2^{\log_2(\frac{2e}{e-1}\cdot\max\{1, \frac{n}{m}\})}}\right) \\
& \geq  & \frac{1}{r+1}\cdot\frac{\E_{v\sim\mathcal{F}}[\sw{\agents}{ \items}]}{2},
 \end{eqnarray*}
where the first inequality is due to Lemma~\ref{lem:welfare-guarantee}, the second inequality is since the social welfare of unit-demand agents is always bounded by giving each agent his favorite item, the third inequality is due to Lemma~\ref{shared-bound}, and the last inequality is by assigning the value of $r$. 

The theorem holds since $r \in O(\log(1+\frac{n\cdot c}{m}))$.
% O( \max\{\log(c+1), \log \frac{cn}{m}\}) $.
\end{proof}

Next, we prove Lemma~\ref{shared-bound}.
\begin{proof}[Proof of Lemma~\ref{shared-bound}]
    Consider the allocation rule $\AllocRule$: For a valuation profile $v$, $\AllocRule$ allocates each item $j$ to one of the agents in $\MostValuableAgents{j}$ uniformly at random.    
    To prove this lemma we bound the expected social welfare by the expected welfare of the allocation rule $\AllocRule$.
    The expected welfare of an agent $i$ under allocation rule $\AllocRule$ is equal to $\E_{v \sim \mathcal{F}}\left[\frac{v_i(\{\MostValuableItem{i}\})}{|\MostValuableAgents{\MostValuableItem{i}}|}\right]$. We next analyze the random variable of $|\MostValuableAgents{\MostValuableItem{i}}|$. We know that by definition $i \in\MostValuableAgents{\MostValuableItem{i}}$, and 
    by the \assumption assumption we have that for every $
    i'\in \agents \setminus \{i\}$, it holds that $\Pr[i'\in \MostValuableAgents{\MostValuableItem{i}} ] \leq \frac{c}{m}$. Combining with the independent assumption, we get that  $|\MostValuableAgents{\MostValuableItem{i}}|$ is   
      $1+ \sum_{i'\neq i} \mathbb{I}[\mu_v(i')=\mu_v(i)]$  which by our independence and \assumption assumptions is stochastically dominated by $1+ Bin(n-1, \frac{c}{m})$. Next we analyze the expected value of  $\frac{1}{\mid \MostValuableAgents{\MostValuableItem{i}}\mid}$.

        \begin{align}
%        \begin{aligned}
        \E\left[\frac{1}{|\MostValuableAgents{\MostValuableItem{i}}|}\right]& \geq \E\left[\frac{1}{1+Bin(n-1,\frac{c}{m})}\right] \nonumber\\ 
        &=\sum_{k=0}^{n-1}\frac{1}{k+1}\binom{n-1}{k}\left(\frac{c}{m}\right)^k\cdot\left(1-\frac{c}{m}\right)^{n-1-k}  \nonumber\\
      %  &= \sum_{k=0}^{n-1}\frac{1}{n}\binom{n}{k+1}(\frac{c}{m})^{k}\cdot(1-\frac{c}{m})^{n-(k+1)} \\ 
        &= \frac{m}{n\cdot c}\sum_{k=0}^{n-1}\binom{n}{k+1}\left(\frac{c}{m}\right)^{k+1}\cdot\left(1-\frac{c}{m}\right)^{n-(k+1)} \nonumber\\ 
        & = \frac{m}{n\cdot c}\cdot\left(1-\left(1-\frac{c}{m}\right)^{n}\right),  \label{expected_value_pr_getting_item}
%        \end{aligned}
    \end{align}
    where the second equality is due to the identity $\binom{n}{k} = \frac{n}{k}\binom{n-1}{k-1}$. The last equality follows by observing that $\sum_{k=0}^{n-1}\binom{n}{k+1}(\frac{c}{m})^{k+1}\cdot(1-\frac{c}{m})^{n-(k+1)}$ is equal to the probability that a binomial random variable with parameters $n$ and $\frac{c}{m}$  equals to some $k$ between $1$ and $n$, i.e., it equals to the probability that a binomial random variable with parameters $n$ and $\frac{c}{m}$ is not equal to $0$.

We note that since Inequality~\eqref{expected_value_pr_getting_item} holds for every realization of $\mu(i)$,  it implies that:
\begin{equation}
    \E_{v \sim \mathcal{F}}\left[\frac{v_i(\{\MostValuableItem{i}\})}{|\MostValuableAgents{\MostValuableItem{i}}|}\right] \geq \E_{v \sim \mathcal{F}}[v_i(\{\MostValuableItem{i}\})]\cdot \frac{m}{n\cdot c}\cdot\left(1-\left(1-\frac{c}{m}\right)^{n}\right). \nonumber % \E_{v \sim \mathcal{F}}\left[\frac{1}{|\MostValuableAgents{\MostValuableItem{i}}|}\right]. 
   % \label{eq:v-ind} 
\end{equation}
    Thus, %by combining Equation~\eqref{expected_value_pr_getting_item} and Equation~\eqref{eq:v-ind} 
    we get that the expected welfare of the allocation rule $\AllocRule$ is: 
    %$$\E_{v\sim\mathcal{F}}[\sum_{i \in \agents } v_i(\{\MostValuableItem{i}\})]\cdot  \frac{m}{n}\cdot(1-(1-\frac{1}{m})^{n}).$$
 $$\E_{v\sim\mathcal{F}}\left[\sum_{i \in \agents } v_i(\{\MostValuableItem{i}\})\right]\cdot  \frac{m}{n\cdot c}\cdot\left(1-\left(1-\frac{c}{m}\right)^{n}\right).$$
    
    Now, we split the analysis depending on whether $n \geq m$, or $ n<m$.

    \paragraph{Case 1: $n\cdot c\geq m$.} In this case $ \frac{m}{n \cdot c}\cdot(1-(1-\frac{c}{m})^{n}) \geq \frac{m}{n\cdot c}\cdot(1-(1-\frac{c}{m})^{\frac{m}{c}}) \geq \frac{m}{n\cdot c}\cdot(1-\frac{1}{e})$, and so we get:
     $$
    \E_{v\sim\mathcal{F}}[\sw{\agents}{\items}]
    \geq  \E_{v\sim\mathcal{F}}\left[\sum_{i \in \agents } v_i(\{\MostValuableItem{i}\})\right]\cdot  \frac{m}{n\cdot c} \cdot \left(1-\frac{1}{e}\right),
    $$
    which concludes the proof of this case.
    \paragraph{Case 2: $n\cdot c < m$.} In this case $ \frac{m}{n\cdot c}\cdot(1-(1-\frac{c}{m})^{n})\geq  \frac{m}{n \cdot c}\cdot \left(1-{e}^{-\frac{cn}{m}}\right)$. Let $f(x)= x\cdot(1-e^{-\frac{1}{x}})$, and observe that this is an increasing function for $x > 0$. Since $n\cdot c<m$, we have that $$ \frac{m}{n\cdot c}\cdot\left(1-e^{\frac{-n\cdot c}{m}}\right) = f\left(\frac{m}{n\cdot c}\right) \geq f(1) = 1-\frac{1}{e} .$$ Overall for both cases:
    $$
    \E_{v\sim\mathcal{F}}[\sw{\agents}{\items}] \geq \E_{v\sim\mathcal{F}}\left[\sum_{i \in \agents } v_i(\{\MostValuableItem{i}\})\right]\cdot  \frac{m}{n\cdot c}\cdot\left(1-\left(1-\frac{c}{m}\right)^{n}\right) \geq \frac{\E_{v\sim\mathcal{F}}\left[\sum_{i \in \agents } v_i(\{\MostValuableItem{i}\})\right]}{\frac{e}{e-1}\cdot \max\{1, \frac{n\cdot c}{m}\}} .
    $$
    This concludes the proof of the lemma.    
\end{proof}

\section{Divisible Goods}\label{sec:divisible}
We now turn toward allocations with \emph{divisible goods}.  Since agents can be allocated fractions of each good, $\allocs$ is now the set of allocations which gives each agent $i$ a fraction $x_{ij} \in [0,1]$ of each good $j$ with the constraint that $\sum_{i \in \agents}{x_{ij}} \leq 1, \forall j \in \items$.  As before, each agent is associated with a valuation function $\vali : [0,1]^{|\items|} \rightarrow \mathbb{R}_{\geq 0}$ and we assume that valuation functions are monotone (i.e., $v_i(\Vec{x}) \leq v_i(\Vec{y})$ for all $\Vec{y} \geq \Vec{x}$ where comparison between two vectors is coordinate-wise) and normalized (i.e., $v_i(\Vec{0}) = 0$). 
In this section, we consider the case where agents' valuation functions are \emph{concave}.  We say that a valuation function $v_i$ is concave if for any two allocation vectors $\Vec{x}$ and $\Vec{y}$ and any $\lambda \in [0,1]$ we have that $\lambda v_i(\Vec{x}) + (1-\lambda) v_i(\Vec{y}) \leq v_i(\lambda \Vec{x} + (1-\lambda)\Vec{y})$.  Note that for any normalized and concave valuation function $v_i$ we have that  $v_i(\lambda \Vec{x}) \geq \lambda v_i(\Vec{x})$ for any $\lambda \in [0,1]$ and any allocation $\Vec{x}$.

We adapt the VCG with copies framework introduced in Section \ref{sec:vcg-copies} to the divisible items setting by proposing the restricted capacity VCG mechanism below.  We then use this mechanism to devise a truthful mechanism with optimal guarantees for the divisible goods setting we study.  Rather than randomly selecting a number of copies of each item, the restricted capacity VCG mechanism randomly chooses a ``capacity'' $q$ for each item, i.e., a maximum amount  $q$ of each item that each agent may receive.  The mechanism then runs VCG on a version of the input modified by this capacity constraint.  This mechanism is, in a sense, simpler than Mechanism \ref{algo:vcg-copies} since it circumvents the need for rounding any of the allocations output by VCG since they are all feasible on the initial instance (since items are divisible). As such, in this setting we are able to find a \emph{universally truthful} mechanism, i.e., a mechanism which is a randomization over truthful mechanisms, which provides a stronger incentive guarantee than truthfulness in expectation.

\begin{algorithm}
\DontPrintSemicolon
\LinesNumbered
\KwIn{An item set $\items$, valuation profile $v_1,\ldots,v_n$,  parameters $r\in \naturals$}
\KwOut{A fractional allocation $\mathbf{x}=(\Vec{x}_1,\ldots,\Vec{x}_n)$ of $\items$ and a payment vector $p=(p_1,\ldots,p_n)$}
Draw $\ell$ uniformly at random from $\{0,\ldots,r\}$\\
Let $q = 2^{-\ell}$\\
Let $v_i^q:[0,1]^{|\items|} \rightarrow \reals_{\geq 0}$ be the function $v_i^q(\Vec{x}) = v_i(\min\{q\Vec{1},\Vec{x}\})$ where the minimum is coordinate-wise\\
Calculate $(A,p) = \vcg{\items}{v_1^q,\ldots,v_n^q}$ \label{step:vcg-restricted}
\SetAlgoRefName{2}
\caption{Restricted capacity VCG}
\label{algo:vcg-restricted}
\end{algorithm}

We first show that Mechanism \ref{algo:vcg-restricted} is universally truthful and individually rational.  Essentially both of these facts follow from the properties of the VCG auction.
\begin{theorem}
    Mechanism \ref{algo:vcg-restricted} is universally truthful and ex-post individually rational.
\end{theorem}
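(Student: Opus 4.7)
The plan is to prove both properties by conditioning on the draw of $\ell$. Mechanism~\ref{algo:vcg-restricted}'s only randomness is the choice of $\ell$, so universal truthfulness will follow once I show that for every fixed $\ell \in \{0,\ldots,r\}$ the resulting deterministic mechanism --- namely VCG on the modified profile $(v_1^q,\ldots,v_n^q)$ with $q=2^{-\ell}$ --- is dominant-strategy truthful and individually rational with respect to the \emph{true} valuations $v_i$. Ex-post IR follows similarly, since it must hold on every realization of $\ell$. Fix $\ell$ and hence $q$ for the rest of the argument.

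The standard VCG analysis already gives that VCG on $(v_1^q,\ldots,v_n^q)$ is dominant-strategy truthful and ex-post IR with respect to the \emph{reported} modified valuations. The only real work is to transfer these guarantees from $v_i^q$ to $v_i$. The bridge is the following saturation observation: because $v_i^q(\vec{x}) = v_i(\min\{q\vec{1},\vec{x}\})$, no agent derives any additional $v_i^q$-value from receiving more than $q$ of any item. I will argue that, without loss of generality, a VCG welfare-maximizing allocation on the modified profile can be taken to satisfy $A_i \leq q\vec{1}$ coordinate-wise for every agent $i$ and every reported profile: if some optimal allocation violates this, truncating $A_i$ to $\min\{A_i, q\vec{1}\}$ leaves $v_i^q(A_i)$ unchanged, and by optimality of the original allocation, redistributing the freed excess to other agents cannot strictly increase the social welfare under $v^q$. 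Fixing the VCG tie-breaking to always output such a truncated allocation is still a valid welfare-maximizing rule, so the standard VCG truthfulness and IR properties continue to hold.

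With this WLOG assumption in place the conclusion is immediate. Since $A_i \leq q\vec{1}$ for every reported profile, $v_i(A_i) = v_i^q(A_i)$ by the definition of $v_i^q$, and hence the true utility $v_i(A_i) - p_i$ equals the VCG utility $v_i^q(A_i) - p_i$ on the modified instance. Ex-post IR for the true valuations then follows directly from VCG's IR on $v^q$. For truthfulness, if agent $i$ reports $\tilde v_i$ instead of $v_i$ and the mechanism returns $(\tilde A_i,\tilde p_i)$, then applying the equality at both the truthful and the deviating report gives
\[
v_i(A_i)-p_i \;=\; v_i^q(A_i)-p_i \;\geq\; v_i^q(\tilde A_i)-\tilde p_i \;=\; v_i(\tilde A_i)-\tilde p_i,
\]
where the middle inequality is VCG's truthfulness on $v^q$ and both outer equalities use the saturation observation.

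The main obstacle is exactly the WLOG truncation in the second paragraph: I need the truncated welfare-maximizing rule to be chosen \emph{consistently across all reported profiles} (not just under truthful reporting), so that both $A_i$ at the truthful report and $\tilde A_i$ at a deviation satisfy the coordinate-wise bound. This consistency is what lets the equality $v_i(\cdot)=v_i^q(\cdot)$ be invoked at both sides of the truthfulness inequality, and it is the only step that uses anything beyond monotonicity of $v_i$ and the black-box properties of VCG.
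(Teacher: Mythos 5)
Your proof is correct and follows essentially the same route as the paper's. The paper's argument also conditions on $\ell$ and observes that VCG on the capped profile $(v_1^q,\ldots,v_n^q)$ is equivalent to a maximal-in-range mechanism on the original valuations restricted to allocations with $A_i\leq q\vec{1}$, then invokes the standard fact that maximal-in-range mechanisms are truthful and ex-post IR, so the whole mechanism is a randomization over such mechanisms. Your truncation/saturation argument is exactly the content of that ``equivalent to a maximal-in-range mechanism'' step, made explicit: you show the designer can fix tie-breaking so that the output allocation is always coordinate-wise at most $q\vec{1}$ (across all reported profiles, not just truthful ones), from which $v_i(A_i)=v_i^q(A_i)$ and the transfer of both truthfulness and IR from the modified to the true valuations follows. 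So the only genuine difference is that the paper cites the MIR black box while you re-derive the relevant piece of it directly from VCG's properties; the concern you flag in your last paragraph about consistency of the truncation rule is already resolved by your own remark that the truncated rule is a fixed, deterministic choice of the designer applied uniformly to every profile.
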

\begin{proof}
    Observe that in line \ref{step:vcg-restricted} of Mechanism \ref{algo:vcg-restricted} one runs the VCG auction on a modified input valuations $\Vec{v}^q$.  This is equivalent to running a maximal-in-range mechanism on the original input valuations $\Vec{v}$ but only over outcomes that allocate at most a $q$ fraction of any good to any agent.  Maximal-in-range mechanisms are known to be truthful and ex-post individually rational (see, e.g., \cite{nisan2007computationally}).  But then, Mechanism \ref{algo:vcg-restricted} is a randomization over truthful and ex-post individually rational mechanisms and, hence, Mechanism \ref{algo:vcg-restricted} is universally truthful and ex-post individually rational.
\end{proof}

Before we prove our main theorem of this section demonstrating the approximation guarantee of our auction, we give some useful facts about concave valuation functions over divisible goods.

\begin{claim}\label{cl:concave-min}
Let $v_i : [0,1]^{|\items|} \rightarrow \reals_{\geq 0}$ be a concave function. Then for every $q\geq 0$, the function $v_i^q : [0,1]^{|\items|} \rightarrow \reals_{\geq 0}$ defined as $v_i^q(\Vec x)  = v_i(\min\{q\Vec 1, \Vec x\})$ is concave.
\end{claim}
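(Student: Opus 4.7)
The plan is to exploit two standard facts: the coordinate-wise cap $\vec{x}\mapsto\min\{q\vec{1},\vec{x}\}$ is a concave (and monotone) map, and $v_i$ is both concave and monotone (the monotonicity comes from the standing assumption in the divisible-goods model). Composing a concave monotone function with a coordinate-wise concave map yields a concave function, so $v_i^q$ will inherit concavity from $v_i$.

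Concretely, I would fix $\vec{x},\vec{y}\in[0,1]^{|\items|}$ and $\lambda\in[0,1]$, set $\vec{z}=\lambda\vec{x}+(1-\lambda)\vec{y}$, and define $f(\vec{x})=\min\{q\vec{1},\vec{x}\}$ (coordinate-wise). The first step is to verify the inequality $f(\vec{z})\geq \lambda f(\vec{x})+(1-\lambda) f(\vec{y})$ coordinate-wise. For each coordinate $j$, this reduces to $\min\{q,\lambda x_j+(1-\lambda)y_j\}\geq \lambda\min\{q,x_j\}+(1-\lambda)\min\{q,y_j\}$, which is a one-line consequence of the fact that $t\mapsto\min\{q,t\}$ is concave on $\mathbb{R}$ (it is the minimum of two affine functions).

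The second step is to chain this with the monotonicity and concavity of $v_i$:
\[
v_i^q(\vec{z}) \;=\; v_i(f(\vec{z})) \;\geq\; v_i\!\left(\lambda f(\vec{x}) + (1-\lambda) f(\vec{y})\right) \;\geq\; \lambda v_i(f(\vec{x})) + (1-\lambda) v_i(f(\vec{y})) \;=\; \lambda v_i^q(\vec{x}) + (1-\lambda) v_i^q(\vec{y}),
\]
where the first inequality invokes monotonicity of $v_i$ applied to the coordinate-wise inequality above, and the second inequality invokes concavity of $v_i$. This is exactly the definition of concavity for $v_i^q$.

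There is no real obstacle here; the only subtle point worth flagging is that monotonicity of $v_i$ is genuinely needed (composing a concave function with a concave map is not concave in general), and monotonicity is available as a standing assumption of the divisible-goods model stated at the beginning of Section~\ref{sec:divisible}. No other machinery is required.
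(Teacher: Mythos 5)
Your proof is correct and follows essentially the same route as the paper: establish the coordinate-wise inequality $\min\{q, \lambda x_j + (1-\lambda)y_j\} \geq \lambda\min\{q,x_j\} + (1-\lambda)\min\{q,y_j\}$, then chain concavity and monotonicity of $v_i$ to conclude. The only cosmetic differences are that you prove the scalar inequality by noting $t \mapsto \min\{q,t\}$ is a minimum of affine functions rather than by casework, and that you explicitly flag that monotonicity is what makes the composition argument work (an observation the paper leaves implicit).
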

\begin{proof}
    Consider an arbitrary triple $(x,y,q) \in [0,1]^3$ and an arbitrary $\lambda \in (0,1)$.  We will demonstrate that \[\lambda\min\{x,q\} + (1-\lambda)\min\{y,q\} \leq \min\{q, \lambda x + (1-\lambda)y\}.\]  If $x \leq q$ and $y \leq q$ then the left-hand side is $\lambda x + (1-\lambda) y$ and the right-hand side is as well since $\lambda \in (0,1)$.  If $x > q$ and $y > q$ then both the left-hand and right-hand sides equal $q$.  Finally, if $x > q$ and $y \leq q$ (the remaining case is symmetric) then the left-hand side is $\lambda q + (1-\lambda)y$.  Since  $y \leq q$ we have that $\lambda q + (1-\lambda)y \leq q$.  On the other hand, since $q < x$ we have that $\lambda q + (1-\lambda)y < \lambda x + (1-\lambda)y$.  Thus, we have that the left-hand side is less than or equal to the right-hand side in this case as well.

    With this fact in hand, we may conclude the proof by observing that 
    \begin{align*}
        \lambda v_i^q(\Vec x) + (1-\lambda)v_i^q(\Vec y) &= \lambda v_i(\min\{\Vec x, q\Vec 1\}) + (1-\lambda)v_i(\min \{\Vec y, q \Vec 1\})\\
        &\leq v_i(\lambda\min\{\Vec x, q\Vec 1\} + (1-\lambda)\min\{\Vec y, q \Vec 1\})\\
        &\leq v_i(\min\{q\Vec 1, \lambda \Vec x + (1-\lambda) \Vec y\})\\
        &=v_i^q(\lambda \Vec x + (1-\lambda) \Vec y)
    \end{align*}
    where the equalities follow by the definition of $v_i^q$, the first inequality follows from the concavity of $v_i$, and the second inequality follows from applying our above inequality coordinate-wise on the vectors $\Vec x, \Vec y,$ and $q \Vec 1$. 
\end{proof}

\begin{claim}\label{cl:concave-closure}\cite{phelps2009convex}
Concave functions are closed under convolution.  I.e., for concave $v_1, \dots, v_n : [0,1]^{|\items|} \rightarrow \reals_{\geq 0}$, the function $v : \items \rightarrow \reals_{\geq 0}$ for which $v(S) = \max_{\Vec x \in \Delta(S)}\sum_{i}{v_i(\Vec{x}_i})$, where $\Delta(S)$ is the set of feasible allocations of the set of items $S$ to agents $1, \dots, n$, then $v$ is also concave.
\end{claim}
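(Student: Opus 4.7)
The plan is to verify concavity of $v$ directly from the definition by taking a convex combination of optimizers for the two endpoints of a chord, and arguing that the combined allocation is feasible at the convex combination of the inputs with value at least the corresponding convex combination of values. Since this is a textbook-style fact about sup-convolutions of concave functions, the argument should be short and there is no serious obstacle beyond carefully tracking feasibility.

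Concretely, I would fix two input vectors $\Vec y^{(1)}, \Vec y^{(2)} \in [0,1]^{|\items|}$ and $\lambda \in [0,1]$, and let $(\Vec x_1^{(1)}, \ldots, \Vec x_n^{(1)})$ and $(\Vec x_1^{(2)}, \ldots, \Vec x_n^{(2)})$ be feasible allocations of $\Vec y^{(1)}$ and $\Vec y^{(2)}$ respectively that attain the maxima defining $v(\Vec y^{(1)})$ and $v(\Vec y^{(2)})$. (If the sup is not attained one passes to an $\varepsilon$-optimizer and lets $\varepsilon \to 0$, but for compact domains and continuous $v_i$ this is unnecessary.) Define
\begin{equation*}
    \Vec x_i^* \;=\; \lambda \Vec x_i^{(1)} + (1-\lambda)\Vec x_i^{(2)} \qquad \text{for each } i \in \agents.
\end{equation*}
Each $\Vec x_i^*$ lies in $[0,1]^{|\items|}$, and summing coordinatewise gives $\sum_{i} \Vec x_i^* \leq \lambda \Vec y^{(1)} + (1-\lambda) \Vec y^{(2)}$, so $(\Vec x_1^*, \ldots, \Vec x_n^*)$ is a feasible allocation of $\lambda \Vec y^{(1)} + (1-\lambda) \Vec y^{(2)}$.

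Next, I would use the concavity of each individual $v_i$ coordinatewise on $(\Vec x_i^{(1)}, \Vec x_i^{(2)})$ to obtain $v_i(\Vec x_i^*) \geq \lambda v_i(\Vec x_i^{(1)}) + (1-\lambda) v_i(\Vec x_i^{(2)})$, sum over $i$, and invoke the definition of $v$ to conclude:
\begin{align*}
    v(\lambda \Vec y^{(1)} + (1-\lambda)\Vec y^{(2)})
    &\;\geq\; \sum_{i \in \agents} v_i(\Vec x_i^*)
    \;\geq\; \lambda \sum_{i \in \agents} v_i(\Vec x_i^{(1)}) + (1-\lambda) \sum_{i \in \agents} v_i(\Vec x_i^{(2)}) \\
    &\;=\; \lambda\, v(\Vec y^{(1)}) + (1-\lambda)\, v(\Vec y^{(2)}),
\end{align*}
where the first inequality uses that $(\Vec x_1^*, \ldots, \Vec x_n^*)$ is feasible for $\lambda \Vec y^{(1)} + (1-\lambda)\Vec y^{(2)}$, the second uses concavity of each $v_i$, and the last equality is by optimality of the chosen allocations at $\Vec y^{(1)}$ and $\Vec y^{(2)}$. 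This is exactly the defining inequality of concavity, so $v$ is concave, and one may alternatively simply cite \cite{phelps2009convex} where this closure property is established in greater generality.
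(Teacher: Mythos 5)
Your proof is correct. The paper does not prove Claim~\ref{cl:concave-closure} itself—it simply cites \cite{phelps2009convex}—so your argument fills in the self-contained proof that the paper defers to the reference. What you write is exactly the textbook argument for concavity of the sup-convolution: convex-combine the two optimal allocations, verify feasibility of the combined allocation at the convex combination of the inputs (which holds coordinatewise), apply concavity of each $v_i$ termwise, and sum. One minor point worth noting for completeness (which you already flag parenthetically): if you want to avoid appealing to continuity and compactness to guarantee attainment of the maxima, you can run the same argument with $\varepsilon$-optimizers and take $\varepsilon \to 0$; either way the conclusion is the defining inequality of concavity, so the proof is sound as written.
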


As in Section \ref{sec:indivisible}, a critical portion of our main proof is that the payments are bounded by the difference in social welfare function for two different, but related inputs.  We denote by $SW(\agents, q, M)$ the optimal social welfare (i.e., the social welfare of the VCG mechanism) to $\agents$ (i.e., the set of all agents) with allocation capacity $q$ and set of items (with corresponding sizes) $M$. We also denote by $p_i(\agents, q, M)$ the payment of agent $i$ when the VCG mechanism is applied to $\agents$ with item set $M$ and with allocation capacity $q$ and let $p(\agents, q, M) = \sum_{i \in \agents}{p_i(\agents, q, M)}$.  Analogously to Claim \ref{cl:payment_bounded_by_SW_diff} for the indivisible goods case, we obtain Claim \ref{cl:payment_bound_divisible} below (whose proof we defer to Appendix \ref{sec:omitted-proofs} due to space constraints).

\begin{claim}\label{cl:payment_bound_divisible}
    For agents with concave valuations it holds that $SW(\agents,q,\items)-\frac{1}{2}SW(\agents, 2q,\items) \geq \frac{p(\agents, q,\items)}{2}$.
\end{claim}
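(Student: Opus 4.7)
The plan is to mirror the proof of Claim~\ref{cl:payment_bounded_by_SW_diff}, replacing its submodularity-of-$SW$ arguments with concave-analytic analogs. First I would adapt the per-agent VCG-payment bound to the divisible setting, then linearize the resulting marginal welfare increase using a supergradient of $SW(\agents, q, \cdot)$ at the supply vector $\items$, and finally bound the sum of that supergradient's coordinates using Lagrangian duality together with concavity of $SW(\agents, \cdot, \items)$ in $q$.

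Let $\Vec{y}^{(q)}$ denote the VCG allocation at capacity $q$. Writing $p_i(\agents,q,\items) = SW(\agents\setminus\{i\},q,\items) - SW(\agents,q,\items) + v_i(\Vec{y}_i^{(q)})$ and observing that the allocation at supply $\items + \Vec{y}_i^{(q)}$ in which agent $i$ receives the added $\Vec{y}_i^{(q)}$ while the other agents follow their leave-$i$-out optimum is feasible with total welfare at least $v_i(\Vec{y}_i^{(q)}) + SW(\agents \setminus \{i\}, q, \items)$, one obtains the divisible analog of the first inequality in the proof of Claim~\ref{cl:payment_bounded_by_SW_diff}:
\[
p_i(\agents, q, \items) \leq SW(\agents, q, \items + \Vec{y}_i^{(q)}) - SW(\agents, q, \items).
\]
Claims~\ref{cl:concave-min} and \ref{cl:concave-closure} imply that $SW(\agents, q, \cdot)$ is concave and nondecreasing in supply, so at $\items$ it admits a nonnegative supergradient $\lambda \in \reals_{\geq 0}^{|\items|}$ satisfying $SW(\agents, q, \Vec{s}) \leq SW(\agents, q, \items) + \lambda \cdot (\Vec{s} - \items)$ for all $\Vec{s}$; this yields $p_i(\agents, q, \items) \leq \lambda \cdot \Vec{y}_i^{(q)}$. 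Summing over $i$ and using $\sum_i \Vec{y}_i^{(q)} \leq \Vec{1}$ together with $\lambda \geq \Vec{0}$ gives $p(\agents, q, \items) \leq \sum_j \lambda_j$.

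The hard part will be bounding $\sum_j \lambda_j$ by $2SW(\agents, q, \items) - SW(\agents, 2q, \items)$, which I plan to do by connecting $\lambda$ (the dual multipliers for the supply constraints $\sum_i y_{ij} \leq 1$) to the capacity multipliers $\mu_{ij}$ for $y_{ij} \leq q$ via Lagrangian duality, and then invoking concavity of $SW(\agents, \cdot, \items)$ in $q$. Strong duality for the concave program defining $SW(\agents, q, \items)$ decomposes $SW(\agents, q, \items) = \sum_j \lambda_j + q\sum_{ij} \mu_{ij} + \sum_i u_i$ with nonnegative agent surpluses $u_i \geq 0$, so $\sum_j \lambda_j \leq SW(\agents, q, \items) - q\sum_{ij}\mu_{ij}$. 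The envelope theorem identifies $\sum_{ij} \mu_{ij}$ with a one-sided derivative of $SW(\agents, \cdot, \items)$ at $q$, and concavity in $q$ yields $SW(\agents, 2q, \items) \leq SW(\agents, q, \items) + q \sum_{ij} \mu_{ij}$; rearranging, $q\sum_{ij}\mu_{ij} \geq SW(\agents, 2q, \items) - SW(\agents, q, \items)$. Substituting back produces $\sum_j \lambda_j \leq 2 SW(\agents, q, \items) - SW(\agents, 2q, \items)$, and combining with the chain above proves the claim. The subtle point is that $SW$ need not be differentiable in $q$, but the argument extends to any element of the concave superdifferential, so the bound survives in the nonsmooth case provided one picks the optimal dual pair corresponding to the chosen supergradient $\lambda$.
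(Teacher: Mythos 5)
Your proof is correct, but it takes a genuinely different route than the paper's. Both proofs begin with the same per-agent bound,
\[
p_i(\agents, q, \items) \leq SW(\agents, q, \items + \Vec y_i^{(q)}) - SW(\agents, q, \items),
\]
which comes from the standard VCG-payment rewriting plus optimality of welfare. After that, the paper argues combinatorially: it orders the agents, telescopes $SW(\agents,q,\items) - SW(\agents,q,\items/2)$ along cumulative bundles $B_i$, and twice invokes concavity of the convolved welfare function over the supply (Claims~\ref{cl:concave-min} and \ref{cl:concave-closure}) to relate these increments to the marginal gains $SW(\agents,q,\items + A_i) - SW(\agents,q,\items)$, together with the observation that halving the optimal allocation at capacity $2q$ is feasible at supply $\items/2$ and capacity $q$. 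You instead linearize immediately via a supergradient $\lambda$ of $SW(\agents,q,\cdot)$ at $\items$, sum the supply constraints to get $p(\agents,q,\items)\leq \sum_j\lambda_j$, and then close the argument by strong Lagrangian duality, decomposing $SW(\agents,q,\items) = \sum_j\lambda_j + q\sum_{ij}\mu_{ij} + \sum_i u_i$ with $u_i \geq 0$ and using that $\mu$ is a supergradient of the value function in the capacity parameter to conclude $q\sum_{ij}\mu_{ij} \geq SW(\agents,2q,\items) - SW(\agents,q,\items)$. Both tracks are sound; what yours buys is a cleaner conceptual picture (item clearing prices bound VCG payments, and duality accounts for the welfare gap) and no need for the agent-ordering/telescoping bookkeeping, at the cost of invoking more machinery (strong duality, which here requires noting the constraints are affine and the primal is feasible, and the envelope/sensitivity characterization of the multipliers). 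One phrasing nit: saying ``concavity of $SW(\agents,\cdot,\items)$ yields $SW(\agents,2q,\items)\leq SW(\agents,q,\items)+q\sum_{ij}\mu_{ij}$'' is slightly loose — concavity alone does not identify $\sum_{ij}\mu_{ij}$ as a valid supergradient; you need the Lagrangian sensitivity result that the optimal dual variables are a supergradient of the value function in the right-hand sides, restricted to the direction where all capacity right-hand sides move together. You do flag this in your final caveat, which correctly notes the argument works for any element of the superdifferential as long as $\lambda$ and $\mu$ come from a single optimal dual pair, so no gap remains.
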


With Claim \ref{cl:payment_bound_divisible} in hand, we are ready to prove the main theorem of this section, which demonstrates the approximation ratio of Mechanism \ref{algo:vcg-restricted}.

\begin{theorem} \label{thm:divisible}
When Mechanism~\ref{algo:vcg-restricted} receives concave valuations it obtains expected residual surplus that is a $O\left(\log(n)\right)$ approximation to the social welfare.
\end{theorem}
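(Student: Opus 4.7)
The plan is to adapt the proof of Lemma~\ref{lem:welfare-guarantee} to the divisible-goods setting, using Claim~\ref{cl:payment_bound_divisible} as the divisible analogue of Claim~\ref{cl:payment_bounded_by_SW_diff}. Let $S_\ell := SW(\agents, 2^{-\ell}, \items)$ and $P_\ell := p(\agents, 2^{-\ell}, \items)$. Since Mechanism~\ref{algo:vcg-restricted} draws $\ell$ uniformly from $\{0,\ldots,r\}$ and returns the VCG outcome directly (with no rounding step), the expected welfare and payment simplify to $\E[SW] = \tfrac{1}{r+1}\sum_{\ell=0}^{r} S_\ell$ and $\E[P] = \tfrac{1}{r+1}\sum_{\ell=0}^{r} P_\ell$.

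For every $\ell \geq 1$, Claim~\ref{cl:payment_bound_divisible} gives $P_\ell \leq 2 S_\ell - S_{\ell-1}$ and hence $S_\ell - P_\ell \geq S_{\ell-1} - S_\ell$. Using $S_0 - P_0 \geq 0$ (ex-post individual rationality) and telescoping over $\ell = 1,\ldots,r$, I would obtain
\[
    \E[SW - P] \;\geq\; \frac{(S_0 - P_0) + (S_0 - S_r)}{r+1}.
\]
Setting $r = 2\lceil \log_2 n \rceil$, the theorem reduces to showing $(S_0 - P_0) + (S_0 - S_r) = \Omega(S_0)$.

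This last inequality is the main obstacle, and I plan to address it via a case analysis. If $S_r \leq S_0/2$, the bound is immediate because the $(S_0 - S_r)$ term alone is $\Omega(S_0)$. Otherwise $S_r > S_0/2$ witnesses an allocation $\bar A$ with $\|\bar A_i\|_\infty \leq 2^{-r}$ and total welfare at least $S_0/2$; intuitively the capacity barely binds, so each agent's externality in the unrestricted VCG at $\ell=0$ should be small. To make this rigorous, I would invoke the allocation-invariant identity
\[
    P_0 \;=\; \sum_{i \in \agents} SW(\agents \setminus \{i\},1,\items) \;-\; (n-1)\,S_0,
\]
together with concavity of $SW(\agents \setminus \{i\},1,\cdot)$ in the item supply (which follows from the concavity-closure property analogous to Claim~\ref{cl:concave-closure}), giving the bound $SW(\agents \setminus \{i\},1,\Vec 1 - A_i) \geq (1 - \|A_i\|_\infty)\,SW(\agents \setminus \{i\},1,\Vec 1)$. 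Together these should yield $P_0 \leq n \cdot 2^{-r} \cdot S_0 \leq S_0/n$, so that $S_0 - P_0 = \Omega(S_0)$. The most delicate step will be reconciling that $\bar A$ need not itself be optimal for $S_0$ when $S_r < S_0$; this likely requires a perturbation or convex-combination argument (mixing $\bar A$ with a VCG-optimal allocation) to apply the invariance while still keeping coordinates small enough for the concavity bound to go through.
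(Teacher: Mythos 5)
Your overall framework---a single telescoping inequality derived from Claim~\ref{cl:payment_bound_divisible}, followed by a case analysis on $S_r$ versus $S_0/2$---is a genuinely different (and more compact) decomposition than the one in the paper. The paper instead expands each $S_\ell$ as a doubly-indexed telescoping sum, interchanges the order of summation, computes a geometric sum, and exploits the fact that $p(2^{-\log n})=0$ because capacity $1/n$ eliminates all externalities; that route never needs a case split. Your inequality $\E[SW - P] \geq \frac{(S_0 - P_0) + (S_0 - S_r)}{r+1}$ is correct, and Case~1 ($S_r \leq S_0/2$) is immediate, so if Case~2 can be closed, your route is arguably cleaner.

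However, Case~2 as written has a real gap, and it is not the technicality you flag at the end. The bound $P_0 \leq n\cdot 2^{-r}\cdot S_0$ requires the VCG allocation $A$ computed at $\ell=0$ (capacity $1$) to satisfy $\|A_i\|_\infty \leq 2^{-r}$ for every $i$, and there is no reason that this holds: the allocation $\bar A$ witnessing $S_r$ has small coordinates, but it is not in general welfare-optimal at capacity $1$, and the allocation-invariant payment identity you invoke does not let you substitute a suboptimal allocation. Your proposed fix of mixing $\bar A$ with a VCG-optimal $A^*$ cannot work either: by concavity, the welfare of $\lambda A^* + (1-\lambda)\bar A$ is only about $\lambda S_0 + (1-\lambda)S_r < S_0$ whenever $\lambda < 1$, so the only mixture achieving welfare $S_0$ is $A^*$ itself. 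Concretely, the conclusion "$P_0 = o(S_0)$ whenever $S_r > S_0/2$" is simply false: with $n=2$, a single item, and $v_1(x)=v_2(x)=x^{0.99}$ one computes $S_r/S_0 = 2^{-0.99} \approx 0.504 > 1/2$ (so Case~2 applies with $r=2$) while $P_0/S_0 = 2^{0.99}-1 \approx 0.986$, which is far larger than $S_0/n$.

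The good news is that Case~2 can be repaired without the infinity-norm argument. Unwinding the proof of Claim~\ref{cl:payment_bound_divisible} at $q=1$ actually yields the slightly sharper bound $P_0 \leq 2\bigl(S_0 - SW(\agents, 1, \items/2)\bigr)$, since the $\frac{1}{2}SW(\agents, 2q, \items)$ term is only a lower bound for $SW(\agents,q,\items/2)$. Because the capacity-$2^{-r}$ allocation $\bar A$ uses at most $n\cdot 2^{-r}\leq 1/n \leq 1/2$ of each item's supply, $\bar A$ is feasible under supply $\items/2$, so $SW(\agents,1,\items/2)\geq S_r$ and hence $P_0 \leq 2(S_0 - S_r)$. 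Substituting into your inequality gives $(S_0-P_0)+(S_0-S_r)\geq S_r$, which in Case~2 exceeds $S_0/2$; combined with Case~1, the numerator is always at least $S_0/2$ and the $O(\log n)$ bound follows.
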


We again note that the lower bound of \citet{hartline2008optimal} implies a lower bound of $\Omega(\log(n))$ for buyers with concave valuation functions (see Section~\ref{sec:contribution}), therefore the result of Theorem~\ref{thm:divisible} asymptotically tight.

\begin{proof}
    For the rest of the proof, we hold the set of agents and items fixed and are concerned only with the allocation restriction $q$.  As such, for ease of presentation, in the remainder of this section, we let $SW(q)$ denote $SW(\agents, q, \items)$ and $p(q)$ denote $p(\agents, q, \items)$.
    
    Then, using the convention that $SW(2^1) = 0$, we get that the residual surplus times $\log{n} + 1$ is:
    \begin{align*}
        (\log{n} + 1)\cdot\mathbb{E}\left[\sum_{i \in \agents}v_i(x_i) - p_i\right]
        &=\sum_{\ell = 0}^{\log n}{\left(SW(2^{-\ell}) - p(2^{-\ell})\right)} \\
        &= \sum_{\ell = 0}^{\log n}{\sum_{t = 0}^{\ell}{2^{-\ell+t}SW(2^{-t})- 2^{-\ell + (t -1)}SW(2^{-(t-1)})}} - \sum_{\ell = 0}^{\log{n}}p(2^{-\ell})\\
        &= \sum_{t = 0}^{\log n}{\sum_{\ell = t}^{\log n}{2^{-\ell+t}SW(2^{-t})- 2^{-\ell + (t -1)}SW(2^{-(t-1)})}} - \sum_{\ell = 0}^{\log{n}}p(2^{-\ell})\\   
        &= \sum_{t = 0}^{\log n}{\left(SW(2^{-t}) - \frac{1}{2}SW(2^{-(t-1)})\right)\cdot\left(2-\frac{2^{t}}{n}\right)} - \sum_{t = 0}^{\log{n}}{p(2^{-t})}.\\
    \end{align*}
    The first equality above applies the definition of residual surplus, the second applies a telescoping sum, the third interchanges the order of summation, and the final equality is a geometric sum.

    We may rewrite the first summation in the above expression as:
    \begin{align}
        &\sum_{t = 0}^{\log n}{\left(SW( 2^{-t}) - \frac{1}{2}SW(2^{-(t-1)})\right)\cdot\left(2-\frac{2^{t}}{n}\right)}\nonumber\\
        &= SW(2^0)\cdot \left(2-\frac{1}{n}\right) +\sum_{t = 1}^{\log n}{\left(SW(2^{-t}) - \frac{1}{2}SW(2^{-(t-1)})\right)\cdot\left(2-\frac{2^{t}}{n}\right)}\nonumber\\
        &\geq SW(2^0)\cdot\left(2 - \frac{1}{n}\right) +\sum_{t = 1}^{\log n - 1}{\left(SW(2^{-t}) - \frac{1}{2}SW(2^{-(t-1)})\right)\cdot\left(2-\frac{2^{t}}{n}\right)}\label{eq:sw-bound-divis}, 
    \end{align}
    where the equality applies the fact that we have set $SW(2^1)$ to be $0$ and the inequality uses the fact that giving each agent half of what they receive in the social welfare maximizing allocation with capacity $2q$ is a feasible allocation when capacities are $q$ (and hence $SW(q) \geq 1/2 \cdot SW(2q)$ for every $q\leq \frac{1}{2}$).

    On the other hand, we may bound the payments (i.e., the second summation in the expression for the residual surplus) as:
    \begin{align}
        \sum_{t = 0}^{\log{n}}{p(2^{-t})} &= \sum_{t = 0}^{\log{n}-1}{p(2^{-t})}        \leq p(2^0) + \sum_{t = 1}^{\log n - 1}{2\cdot\left(SW(2^{-t}) - \frac{1}{2}SW(2^{-(t-1)})\right)}\label{eq:pay-bound-divis},
    \end{align}
    where the equality uses the fact that if the allocation capacity of each item is capped at $1/n$ then the payments are $0$ (since each agent has no externality) and the inequality applies Claim \ref{cl:payment_bound_divisible} to each term in the summation with $t \geq 1$.
    Finally, we may combine Inequalities \eqref{eq:sw-bound-divis} and \eqref{eq:pay-bound-divis} to obtain
    \begin{align*}
        &(\log{n} + 1)\cdot\mathbb{E}\left[\sum_{i \in \agents}v_i(x_i) - p_i\right]\\
        &\geq SW(2^0)\cdot\left(2 - \frac{1}{n}\right) - p(2^0) -\frac{1}{n}\sum_{t=1}^{\log{n}-1}{2^t\left(SW(2^{-t}) - \frac{1}{2}SW(2^{-(t-1)})\right)}\\
        &\geq SW(2^0)\cdot\left(1-\frac{1}{n}\right)-\frac{1}{n}\sum_{t=1}^{\log{n}-1}{2^t\left(SW(2^{-t}) - \frac{1}{2}SW(2^{-(t-1)})\right)}\\
        &= SW(2^0) - \frac{1}{n} \sum_{t=0}^{\log{n}-1}{\left(2^tSW(2^{-t}) - 2^{t-1}SW(2^{-(t-1)})\right)}\\
        &=SW(2^0) - \frac{1}{n}\cdot\frac{n}{2}SW\left(\frac{2}{n}\right)\\
        %&=SW(2^0) - \frac{1}{n}\left(\frac{n}{2}SW\left(\frac{2}{n}\right) - 2^0SW(2^0)\right)\\
        &\geq \frac{1}{2}SW(2^0),
    \end{align*}
    where the second inequality follows from the fact that in the VCG mechanism the social welfare is more than the sum of the payments, the second equality is the computation of the telescoping sum, and the final inequality uses the fact that $SW(2^0) \geq SW(2/n)$.
\end{proof}

%We remark that the $O(\log{n})$-approximation guarantee of Mechanism \ref{algo:vcg-restricted} is, indeed, the best possible.  This is due to the lower bound on the performance of any randomized allocation mechanism for a single indivisible item shown in \cite{hartline2008optimal}.  Since fractional and randomized allocations coincide for the single-item case (when bidders are expected utility maximizers), we obtain a $\Omega(\log{n})$ lower bound on the performance of any mechanism in our multiple heterogeneous divisible goods setting.

% \input{2-agents}
\section{Discussion and Future Directions} \label{sec:discussion}
In this paper, we study mechanism-design for maximizing consumer surplus.
Our work is the first to achieve a sub-linear approximation to the welfare in adversarial multi-parameter settings, 
and we obtain asymptotically optimal guarantees for various settings including an $O(\log( n))$-approximation for unit-demand and multi-unit submodular agents,
an $O(\log{(1 + \frac{n\cdot c}{m})})$-approximation for Bayesian unit-demand settings satisfying the no-superstar-item assumption,
%$O(1+\log\frac{n}{m})$ for Bayesian unit-demand setting, 
and an $O(\log(n))$-approximation for agents with concave valuations over divisible goods.
We also resolve some additional open questions regarding instances with two bidders from \cite{hartline2008optimal}.

We remark that all our mechanisms are prior-independent (or  prior-free) while the matching lower bounds are achieved in instances where the agents' values are independently sampled from the same distribution and the mechanisms are only required to be Bayesian incentive compatible.
A natural future direction is whether this is true for other settings as well, whether Bayesian incentive compatible mechanisms have the same worst-case guarantees as prior-independent ones. 

% Therefore, in all the settings that we consider in this paper, there is no gap in the performances of prior-independent mechanisms and Bayesian mechanisms. 

% Maybe compare it to revenue maximization where there is a gap and maximizing welfare where there is no gap 

% We think it might be interesting to investigate whether this is true in general settings or whether there are some cases where knowing the distribution allows us to get stronger guarantees. Depending on the answer to this question we might need to come up with new types of mechanisms that make use of the underlying distribution for those settings.   } 

\paragraph{Beyond Unit-Demand and Multi-Unit.} We next discuss how to adapt Mechanism~\ref{algo:vcg-copies} for a broader class of valuations.
Consider the case where agents' valuations are weighted matroid valuation functions under known matroids. In other words, each agent $i\in\agents $ is associated with a (publicly known to the auctioneer) matroid $M_i=(\items,E_i)$, and private values  $v^i_1,\ldots,v^i_m \in \reals_{\geq 0} $ where agent's $i$ valuation function $v_i:2^\items \rightarrow \reals_{\geq 0} $ satisfies $v_i(S) = \max_{T\in E_i} \sum_{j\in S\cap T} v^i_j$.
An allocation $A' = (A'_1, \dots A'_n)$ is non-redundant with respect to the class of weighted matroid functions (for known matroids) if $g(A'_i) \in E_i$ and an agent is not allocated two copied of the same item.
For this case, we construct a $1$-valid rounding scheme that allocates each item independently with probability $\frac{1}{2^\ell}$ to one of its recipients of the copies. This is indeed a $1$-valid rounding scheme since matroids are downward-closed, which implies that the expected value each agent receives by independent allocation satisfies Equation~\eqref{eq:q-valid}.  

% This mechanism is truthful in expectation since for weighted matroid functions (or even additive subject to downward-closed constraints) it holds that for every set $A_i'$ allocated to agent $i$ at Step~\ref{step:vcg} of Mechanism~\ref{algo:vcg-copies} it holds that $g(A_i') \in E_i$ and the agent's valuation is additive on the potential sets allocated to him. 
Mechanism ~\ref{algo:vcg-copies} applied with this rounding scheme is TIE and ex-post IR by Claim~\ref{claim:epir-tie}, and provides an $O(\log n)$ approximation by Lemma~\ref{lem:welfare-guarantee} as weighted matroid functions are gross-substitutes, and since weighted matroid functions are closed\footnote{By Claim~\ref{cl:closure} under the transformation that creates copies, and by similar arguments the same holds with respect to weighted matroid functions.} under the operation of adding copies of items. For achieving the $O(\log n)$-approximation, the mechanism needs to know the underlying matroids $M_i$ (if the matroids are private information and the agents report them, then the truthfulness proof requires that the rounding scheme should be applied to all possible allocations and not just allocations that allocate feasible sets to the agents).

% Since weighted matroid functions are gross-substitutes, and since weighted matroid functions are closed\footnote{By Claim~\ref{cl:closure}, matroid rank functions are closed under the transformation that creates copies, and by similar arguments the same holds with respect to weighted matroid functions.} under the operation of adding copies of items we get that we can extend the $O(\log(n))$ approximation with respect to the welfare also to this case.  

We leave as an open direction to extend our framework (or variants of it) beyond the settings studied in this paper, or in general devise truthful mechanisms with optimal surplus guarantees for other classes of functions (e.g., submodular, XOS, subadditive, supermodular, and additive subject to downward-closed constraints).

\paragraph{An alternative Benchmark}
In this paper, we considered mechanism design for surplus maximization and compared it to the first-best benchmark, i.e., the social welfare. An interesting open question is establishing guarantees with respect to tighter benchmarks for surplus maximization. One option is trying to generalize the $\mathcal G$ benchmark introduced in \cite{hartline2008optimal} for single-parameter settings. However, the definition of $\mathcal G$ relies heavily on the single-parameter structure, and it is not clear whether there is an intuitive way to generalize it to multi-dimensional settings. Furthermore, as discussed in \cite{goldner2024simple}, finding a tractable alternative benchmark seems to be a difficult problem in its own right. They demonstrate this by examining two prominent approaches in revenue maximization - bounding the optimal revenue via Langrangian duality \cite{duality-paper} or by generating copies of the agents \cite{Chawla-copies} - and highlighting the challenges in applying these methods to multi-dimensional surplus maximization settings.

An open direction for future research could involve investigating additional concepts of truthfulness, such as universal truthfulness (studied briefly in Section \ref{sec:divisible}) and Bayesian incentive compatibility. While the primary focus of this paper has been on mechanisms that are truthful in expectation, these alternate notions present valuable areas for further exploration.

\begin{comment}
\section{Many Agents}
\dnote{Do we want to remove this claim?}
\begin{claim}
    There exists $v=(v_1,\ldots,v_n)$ such that $\mathcal{G}(v) \in  O\left(\frac{\max_i v_i}{\log(n)}\right)$
\end{claim}
\begin{proof}
    $v_i=1-\frac{\log i}{\log n}$
\end{proof}

\section{Follow-up directions}

\begin{itemize}
\item In the case of i.i.d. bidders, can we find a prior-free algorithm with a better approximation guarantee compared to the best mechanism that knows the distribution?
\item If we allow a bilateral trade-based mechanism, can we do better against first-best?  Note that random allocation plus allowing the allocated bidder to propose a price is only BIC, rather than DSIC.
\item In the case of two bidders and two (or more) items where bidders have arbitrary monotone valuation functions, can we obtain better than a $2/3$-approximation against the analog of $\mathcal{G}$ to this more general setting (e.g., the surplus of the best symmetric mechanism)?
\item In the case of $n$ bidders and $k$ items where bidders are unit-demand over the goods, can we recover the $\log{n/k}$ bound of \cite{hartline2008optimal}.
\end{itemize}
\end{comment}

\bibliographystyle{plainnat}
\bibliography{bibliography}
\appendix
\section{Omitted Proofs}\label{sec:omitted-proofs}

% \subsection{Proof of Lemma \ref{shared-bound}}\label{app:unB}
% \input{omitted-proofs-unit-demand}

\subsection{Proof of Claim \ref{cl:payment_bound_divisible}}
\begin{proof}[Proof of Claim \ref{cl:payment_bound_divisible}]
    Our proof proceeds similarly to the proof of Claim \ref{cl:payment_bounded_by_SW_diff}.  Let $A_i(\agents,q,M)$ denote the bundle of goods allocated to agent $i$ by the VCG mechanism ran with valuation vector $\Vec{v}^q$ (equivalently, with valuation vector $\Vec v$ but allocation capacities $q$) and items $M$.  We have that
    \begin{align*}
        p_i(\agents, q, \items) &= SW(\agents \setminus \{i\}, q,\items) - SW(\agents, q,\items) + v_i^q(A_i(\agents,q,\items)) \\
        &= SW(\agents \setminus \{i\}, q,\items) - SW(\agents, q,\items) + v_i(A_i(\agents,q,\items)) \\
        &\leq SW(\agents,q,\items+A_i(\agents,q,\items)) - SW(\agents,q,\items),
    \end{align*}
    where the first equality applies the VCG payment of agent $i$, the second applies the fact that each entry in $A_i^q$ is at most $q$ since it is a bundle allocated by the social-welfare maximizing allocation with restriction parameter $q$, and the inequality is due to the optimality of the social welfare of the VCG allocation.
    Summing the above inequality over agents then gives 
    \begin{equation}\label{eq:price-divis-lhs}
    \sum_{i \in \agents}{p_i(\agents,q,\items)} \leq \sum_{i \in \agents}{SW(\agents,q,\items+A_i(\agents,q,\items)) - SW(\agents,q,\items)}.
    \end{equation}
    
    Index the agents in an arbitrary order from $1$ to $n$ and let $B_i(\agents, q, M)$ denote the bundle containing all the goods allocated from agents $1$ to $i$, inclusive, in the VCG allocation with capacity $q$ and item set $M$, i.e., $B_i(\agents, q, M) = \sum_{k = 1}^{i}{A_i(\agents,q,M)}$ where the sum is taken coordinate-wise on the allocation vectors.  Then, we also have that
    \begin{align}
        SW(\agents,q,\items) - \frac{1}{2}SW(\agents,2q,\items) &\geq SW(\agents,q,\items) - SW\left(\agents,q,\frac{\items}{2}\right)\nonumber \\
        &=\sum_{i \in \agents}{SW\left(\agents,q,\frac{\items}{2} + B_i(\agents, q, \frac{\items}{2})\right) - SW\left(n,q,\frac{\items}{2}-B_{i-1}(\agents, q, \frac{\items}{2})\right)}\nonumber\\
        &\geq \sum_{i \in \agents}{SW\left(\agents,q,\items+A_i(\agents,q,\frac{\items}{2})\right) - SW(\agents,q,\items)}\nonumber\\
        &\geq \frac{1}{2}\sum_{i \in \agents}{SW(\agents,q,\items + A_i(\agents,q,\items)) - SW(\agents,q,\items)}, \label{eq:price-divis-rhs}
    \end{align}
    where the first inequality is due to the fact that the resulting allocation from giving each agent exactly half of what she receives in $SW(\agents,2q,\items)$ is a feasible allocation when the item input is $\items/2$ and capacity is $q$, the first equality is by telescoping sum, the second inequality is due to the fact that by Claims \ref{cl:concave-min} and \ref{cl:concave-closure} we have that the social welfare function is concave, and the final inequality is also due to this fact.
    Combining Inequalities \eqref{eq:price-divis-lhs} and \eqref{eq:price-divis-rhs} then completes the proof of the claim.
\end{proof}

\section{Improved Approximations for Two Agents}\label{sec:two-agents}
In light of our results in the previous sections, which show that the surplus guarantees achievable in several well-studied multi-parameter auction settings with many agents are no worse than the surplus guarantees achievable in the canonical single-item setting with many agents, a natural question is whether there exist settings which are more challenging from an approximation standpoint than the single-item setting.  To explore this question further, we now turn to the interesting special case of two agents.  In previous work of \cite{hartline2008optimal}, it was demonstrated that there exists a mechanism for two agents and a single item that achieves surplus which is a $3/2$-approximation to the optimal social welfare\footnote{More precisely, they demonstrate that their mechanism achieves a $3/2$-approximation to a weaker benchmark $\mathcal{G}$ which we discuss in Subsection \ref{sec:alt-benchmark}.  The proof of this guarantee, however, implies a $3/2$-approximation to the stronger benchmark of optimal social welfare, which is of interest to the majority of our paper.}.
We generalize this result with Mechanism \ref{algo:two-agents-many-items} which, as we demonstrate in Theorem \ref{thm:two-agents-welfare}, obtains a $3/2$-approximation guarantee in the case of two agents with \emph{arbitrary} monotone valuation functions over an item set $\items$.

\begin{algorithm}
\DontPrintSemicolon
\KwIn{An item set $\items$, a valuation profile of two agents $v_1,v_2$}
\KwOut{An allocation $A=(A_1,A_2)$ of $\items$ and a payment vector $p=(p_1,p_2)$}
With probability $1/3$: Run the VCG auction on the reported valuations \\
With probability $1/3$: Allocate the grand bundle of items to bidder $1$ and charge nothing \\
With remaining $1/3$ probability: Allocate the grand bundle of items to bidder $2$ and charge nothing
\SetAlgoRefName{3}
\caption{A two bidder, multi-item auction with optimal approximation for monotone valuations}
\label{algo:two-agents-many-items}
\end{algorithm}

\begin{theorem}\label{thm:two-agents-welfare}
    Mechanism \ref{algo:two-agents-many-items} obtains surplus utility that is a $3/2$-approximation to the optimal social welfare.
\end{theorem}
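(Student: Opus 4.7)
The approach is a direct computation of the expected residual surplus by analyzing each of the three branches of Mechanism~\ref{algo:two-agents-many-items} separately and then summing. Let $(S_1^*, S_2^*)$ denote a welfare-maximizing partition of $\items$, so that the optimal social welfare is $\opt = v_1(S_1^*) + v_2(S_2^*)$. The plan is to show that the expected residual surplus of the mechanism equals exactly $\tfrac{2}{3}\opt$.

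First I would handle the two ``free allocation'' branches. If the grand bundle is awarded to bidder $i$ at price $0$, the contribution to the residual surplus is $v_i(\items)$, so these two branches together contribute $\tfrac{1}{3} v_1(\items) + \tfrac{1}{3} v_2(\items)$ to the expectation.

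Next I would compute the residual surplus generated by the VCG branch. By the VCG payment rule in a two-bidder setting, bidder $1$'s payment is $v_2(\items) - v_2(S_2^*)$ and bidder $2$'s payment is $v_1(\items) - v_1(S_1^*)$ (both nonnegative by monotonicity). Hence the residual surplus in this branch is
\begin{equation*}
v_1(S_1^*) + v_2(S_2^*) - \bigl[v_2(\items) - v_2(S_2^*)\bigr] - \bigl[v_1(\items) - v_1(S_1^*)\bigr] = 2\opt - v_1(\items) - v_2(\items).
\end{equation*}

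Finally I would combine the three branches, each weighted by $\tfrac{1}{3}$:
\begin{equation*}
\tfrac{1}{3}\bigl(2\opt - v_1(\items) - v_2(\items)\bigr) + \tfrac{1}{3} v_1(\items) + \tfrac{1}{3} v_2(\items) = \tfrac{2}{3}\opt,
\end{equation*}
which yields a $3/2$-approximation. The cancellation is the whole point of the mechanism's $(1/3, 1/3, 1/3)$ mixing weights: the negative $-v_1(\items) - v_2(\items)$ term coming from the VCG payments is exactly offset by the $v_1(\items) + v_2(\items)$ contributed by the two free-allocation branches. There is no real obstacle beyond carefully bookkeeping the VCG payment formula in the two-bidder case and observing that $v_i(\items) \geq v_i(S_i^*)$ by monotonicity so that the payments are well-defined and nonnegative; no assumption on the structure of the valuations beyond monotonicity is required.
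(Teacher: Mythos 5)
Your proof is correct and follows essentially the same path as the paper: it computes the VCG branch's surplus exactly as $2\opt - v_1(\items) - v_2(\items)$, observes that the two free-allocation branches contribute $\tfrac{1}{3}(v_1(\items)+v_2(\items))$, and lets the terms cancel to give $\tfrac{2}{3}\opt$. Your added remark that monotonicity makes the VCG payments nonnegative is a small, correct clarification but not a departure from the paper's argument.
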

\begin{proof}
    Let $A^*_i$ denote the bundle allocated to bidder $i$ in the optimal solution.  We may then write the total social welfare of the optimal solution is $v_1(A^*_1) + v_2(A^*_2)$.  Observe that the surplus obtained by running the VCG mechanism on the entire reported valuations is \begin{align*}
        \text{Surp}(\texttt{VCG})&=v_1(A^*_1) + v_2(A^*_2) - ((v_2(\items) - v_2(A^*_2)+(v_1(\items) - v_1(A^*_1))) \\
        &= 2(v_1(A^*_1) + v_2(A^*_2)) - (v_1(\items) + v_2(\items)).
    \end{align*}
    On the other hand, we have that the surplus obtained by allocating the grand bundle uniformly at random to one of the two agents is
    \begin{align*}
        \text{Surp}(\texttt{RA}) &= \frac{v_1(\items) + v_2(\items)}{2}.
    \end{align*}
    But then, by running the VCG mechanism with probability $1/3$ and the random allocation mechanism with probability $2/3$ we obtain that the resulting surplus of Mechanism \ref{algo:two-agents-many-items} is 
    \begin{align*}
        \frac{1}{3}\cdot(2(v_1(A^*_1) + v_2(A^*_2)) - (v_1(\items) + v_2(\items))) + \frac{2}{3}\cdot \frac{v_1(\items) + v_2(\items)}{2} = \frac{2}{3}\cdot (v_1(A^*_1) + v_2(A^*_2)),
    \end{align*}
    as desired.
\end{proof}

We now complement this result by providing an instance based on a construction of \cite{hartline2008optimal} involving two agents and a single item which shows that $3/2$ is the best approximation possible, even in the case of a single-item.  These two results together then imply that, perhaps surprisingly, for the special case of two agents (disregarding computational and communication concerns) allocating a single-item to maximize surplus is as hard as allocating any number of items when agents have \emph{arbitrary} monotone valuations.

\begin{theorem}
    No mechanism for two agents can achieve surplus utility that is a $3/2-\epsilon$-approximation  for $\epsilon > 0$ to the optimal social welfare even in the case of a single-item and even when the agents draw their values independently from identical distributions known to the mechanism.
\end{theorem}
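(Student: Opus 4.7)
The plan is to exhibit an i.i.d.\ distribution on bidder values for which the ratio of expected welfare to expected residual surplus is at least $3/2$ for every truthful mechanism. Specifically, I would let both bidders draw their value independently from the exponential distribution with rate $1$, so that $f(v) = e^{-v}$ and $1-F(v) = e^{-v}$ on $[0,\infty)$. The self-hazard-rate identity $f(v) = 1-F(v)$ is the key feature that drives the argument.

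First, a routine calculation gives the welfare benchmark:
\[
\E[\mathrm{OPT}] = \E[\max(v_1, v_2)] = \int_0^\infty \bigl(1 - F(t)^2\bigr)\,dt = \int_0^\infty \bigl(2e^{-t} - e^{-2t}\bigr)\,dt = \frac{3}{2}.
\]
Second, fix any TIE and EPIR mechanism with the (implicit) non-negative payments from our model, and let $x_i(v_i, v_{-i})$ denote the interim probability that bidder $i$ receives the item (averaging over internal randomness). Since each bidder is single-parameter, Myerson's characterization applies to the interim allocation: $x_i(\cdot, v_{-i})$ is monotone non-decreasing, and the expected residual utility of bidder $i$ at the profile $(v_i, v_{-i})$ is exactly $\int_0^{v_i} x_i(z, v_{-i})\,dz$ (using that payments vanish when $v_i = 0$, which follows from EPIR together with $p_i \geq 0$).

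Third, I would bound the total expected surplus by unwinding the integral and exploiting the exponential identity:
\begin{align*}
\E[\text{Surplus}]
&= \sum_{i=1}^{2} \E_{v_i, v_{-i}}\!\left[\int_0^{v_i} x_i(z, v_{-i})\,dz\right]
= \sum_{i=1}^{2} \int_0^\infty \E_{v_{-i}}[x_i(z, v_{-i})]\cdot(1-F(z))\,dz \\
&= \sum_{i=1}^{2} \int_0^\infty \E_{v_{-i}}[x_i(z, v_{-i})]\cdot f(z)\,dz
= \sum_{i=1}^{2} \E_{v_1, v_2}[x_i(v_1, v_2)] \leq 1,
\end{align*}
where the first equality rewrites the integral as $\int_0^\infty \mathbb{1}[v_i \geq z]\,x_i(z, v_{-i})\,dz$ and then applies Fubini plus independence of $v_i$ and $v_{-i}$; the second equality uses $1-F(z) = f(z)$; and the final inequality is feasibility, $x_1(v_1,v_2) + x_2(v_1,v_2) \leq 1$ pointwise. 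Combining with the welfare bound gives $\E[\mathrm{OPT}] / \E[\text{Surplus}] \geq (3/2)/1 = 3/2$, ruling out any $(3/2 - \epsilon)$-approximation on this instance.

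The main conceptual hurdle is noticing the self-hazard-rate property of the exponential distribution, which collapses the Myerson surplus integral into the raw expected allocation probability; once that is spotted, the rest is a direct computation and an application of feasibility. Notably, no symmetrization or anonymity reduction is required, since the argument proceeds bidder-by-bidder and the feasibility constraint is applied only at the end.
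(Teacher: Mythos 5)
Your proposal is correct and uses the same hard instance as the paper (two i.i.d.\ standard exponential bidders, with $\E[\max(v_1,v_2)] = 3/2$), but it is more self-contained in one important respect. Where the paper simply cites \citet{hartline2008optimal} for the fact that the Bayesian-optimal surplus mechanism on this instance is random allocation with expected surplus $1$, you derive the bound $\E[\text{Surplus}] \leq 1$ directly for \emph{every} TIE, EPIR mechanism using Myerson's payment identity plus Fubini, and then observe that the self-hazard-rate identity $1-F(z) = f(z)$ of the exponential collapses the surplus integral $\int_0^\infty (1-F(z))\,\E_{v_{-i}}[x_i(z,v_{-i})]\,dz$ into the raw allocation probability $\E[x_i]$, after which feasibility $\sum_i x_i \leq 1$ finishes the job. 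This is exactly the mechanism by which Hartline--Roughgarden's ``virtual value for surplus'' becomes trivial for the exponential, so your derivation both replaces the black-box citation and exposes \emph{why} the exponential is the extremal instance. The one thing to state explicitly (you hint at it) is that the interim utility formula $u_i(v_i,v_{-i}) = \int_0^{v_i} x_i(z,v_{-i})\,dz$ requires $u_i(0,v_{-i}) = 0$; this follows from EPIR together with the paper's restriction $p \in \reals_{\geq 0}^n$, since $u_i(0,v_{-i}) = -\E[p_i(0,v_{-i})]$ is both $\leq 0$ and $\geq 0$. With that nailed down, the argument is airtight.
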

\begin{proof}
    Consider two agents drawing their values i.i.d. from a standard exponential distribution.  As demonstrated by \cite{hartline2008optimal}, the mechanism that maximizes the surplus for such a setting randomly allocates the item to one of the two agents and the expected surplus of such a mechanism is $1$.  On the other hand, the expected social welfare is equal to the expectation of the maximum of two random variables $X_1$ and $X_2$ drawn independently from an exponential distribution.  We may compute this as 
    \begin{align*}
    \int_0^\infty\text{Pr}[\max\{X_1,X_2\} > z]~dz &= \int_0^\infty\text{Pr}[X_1 > z] +\text{Pr}[X_2 > z] - \text{Pr}[X_1 > z \wedge X_2 > z] ~dz \\
    &= \int_0^\infty\text{Pr}[X_1 > z] +\text{Pr}[X_2 > z] - \text{Pr}[X_1 > z]\cdot\text{Pr}[X_2 > z] ~dz\\
    &= \int_0^\infty e^{-z} + e^{-z} - e^{-2z}~dz =3/2.\qedhere
    \end{align*}
\end{proof}

\subsection{Alternative Benchmarks}\label{sec:alt-benchmark}
Recall that the mechanisms in Sections \ref{sec:indivisible} and \ref{sec:divisible} achieve surplus guarantees which are optimal approximations to the social welfare.  However, these guarantees are asymptotic and logarithmic, raising the question of whether or not there exist tighter benchmarks that can better separate the performance of the two mechanisms.  As all prior-free mechanisms are anonymous (i.e., cannot treat differently agents which are, a priori, identical), we would like to compare the performance of a prior-free mechanism to the surplus achieved by the best anonymous, truthful mechanism.  This observation led to the development of the benchmark $\mathcal{G}(\Vec v)$ in \cite{hartline2008optimal}.  Formally, $\mathcal{G}(\Vec v) = \sup_{F} \text{Opt}_F(\Vec v)$ where $\text{Opt}_F(\Vec v)$ is the surplus of the Bayesian optimal mechanism when all bidders draw their value from a known distribution $F$ and the realization of values is $\Vec v$.  \citet{hartline2008optimal} provide a prior-free mechanism which obtains a $O(1)$-approximation to $\mathcal{G}$ in the case of unit-demand bidders and \emph{identical} items.  In addition, they raise an open question of determining the best-possible approximation ratio when the benchmark is $\mathcal{G}$ ``even in the two agent, one unit special case'' \cite{hartline2008optimal}.  We resolve this open question below.

\begin{algorithm}
\DontPrintSemicolon
\KwIn{A single item, two values $v_1$ and $v_2$}
\KwOut{An allocation of the item and a payment}
Re-index the bidders and values in non-increasing order of the reported value (i.e., such that $v_1 \geq v_2$) \\
\If{$v_1 > 3v_2$}{With probability $4/5$ allocate the item to bidder $1$ and charge bidder $1$ a price of $\frac{5v_2}{4}$\\With the remaining probability allocate the item to bidder $2$ and charge bidder $2$ a price of $0$}
\If{$v_1 \leq 3v_2$}{With probability $1/2$ allocate the item to bidder $1$ and charge bidder $1$ a price of $\frac{v_2}{5}$\\With the remaining probability allocate the item to bidder $2$ and charge bidder $2$ a price of $\frac{v_1}{5}$}
\SetAlgoRefName{4}
\caption{A two bidder, single-item auction with optimal approximation to $\mathcal{G}$}
\label{algo:G-optimal}
\end{algorithm}

\begin{theorem}\label{thm:G-truthful}
    Mechanism \ref{algo:G-optimal} is truthful-in-expectation, ex-post individually rational, and obtains surplus utility that is a $5/4$-approximation to $\mathcal{G}(\Vec v)$.
\end{theorem}
\begin{proof}
    Recall that in single-parameter environments, a mechanism is truthful if it has a monotone allocation rule and charges payments corresponding to the identity given in \cite{myerson1981optimal}.  By observation, it is clear that Mechanism \ref{algo:G-optimal} is monotone, i.e., the probability a bidder is allocated a good is (weakly) increasing with her bid (holding the other bid constant).  As such, it remains to verify that the prices charged by the mechanism are the threshold payments (i.e., those implied by the payment identity of \citet{myerson1981optimal}).  Consider varying the value $v_i$ of bidder $i$ when the value of the other bidder $j$ is fixed to $v_j$.  When $0 \leq v_i < \frac{1}{3}v_j$ the probability that bidder $i$ receives the good is $0.2$ and she pays $0$, as expected by the payment identity.  When $\frac{1}{3}v_j \leq v_i \leq 3v_j$ the probability that $i$ receives the good is $0.5$.  As such, by the payment identity, $i$ should be charged $\frac{1}{3}v_j \cdot (0.5 - 0.2) = \frac{1}{10}v_j$ and this is precisely the price the mechanism charges (normalized by the probability to be allocated which is half). 
    Finally, when $3v_j < v_i$ the probability that $i$ receives the good is $0.8$. But then, by the payment identity, $i$ should be charged $3v_j \cdot (0.8 - 0.5) + \frac{1}{10}v_j = v_j$ which is precisely what the mechanism charges (again, normalized by the allocation probability $\frac{4}{5}$).

    Now, we turn to the analysis of the mechanism's utility. Index the bidders in decreasing order of value such that we have  $v_1 \geq v_2$.  As demonstrated in \cite{hartline2008optimal}, in the special case of two agents and a single item, $\mathcal{G}(\Vec v)$ is equal to the better of performing a Vickrey auction or random allocation.  In other words, we have that $\mathcal{G}(\Vec v) =  \max\left\{v_1-v_2, \frac{v_1+v_2}{2}\right\}$\footnote{We note that in the proof of Proposition 4.1 in \cite{hartline2008optimal}, after arguing that $\mathcal{G}(\Vec v)$ is the better of a Vickrey auction or random allocation in the case of two agents, they incorrectly compute that $\mathcal{G}(\Vec v) = \max\left\{v_1-\frac{v_2}{2}, \frac{v_1+v_2}{2}\right\}$.  This leads them to give an incorrect lower bound of $4/3 > 5/4$ on the approximation achievable by any mechanism.} 

Consider first the case that $v_1 > 3v_2$.  Then, we have that $\frac{v_1}{2} > \frac{3v_2}{2}$ and thus $\mathcal{G}(\Vec v) = v_1 - v_2 > \frac{v_1 + v_2}{2}$. 
In this case, the mechanism obtains surplus \[\frac{4}{5}(v_1-\frac{5v_2}{4}) + \frac{1}{5}v_2 = \frac{4}{5}\mathcal{G}(\Vec v).\]  
On the other hand, if $v_1 \leq 3v_2$ we have that $\mathcal{G}(\Vec v) = \frac{v_1 + v_2}{2} \geq v_1 - v_2$.  In this case, our mechanism obtains surplus \[\frac{1}{2}(v_2 -\frac{v_1}{5})  + \frac{1}{2}(v_2-\frac{v_1}{5}) = \frac{4}{10}(v_1 + v_2) = \frac{4}{5}\mathcal{G}(\Vec v).\qedhere\]
\end{proof}

We complement the upper bound that Mechanism \ref{algo:G-optimal} achieves by demonstrating that it is the best possible with Theorem \ref{thm:lower-bound-G} below.

\begin{theorem}\label{thm:lower-bound-G}
    No mechanism for two agents can achieve surplus utility that is a $5/4-\epsilon$-approximation  for $\epsilon > 0$ to $\mathcal{G}(\Vec v)$ even in the case of a single-item and even when the agents draw their values independently from identical distributions known to the mechanism.
\end{theorem}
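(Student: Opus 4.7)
The plan is to use the same i.i.d.\ exponential instance that witnessed the $3/2$ lower bound in the previous theorem, and to show that on this very instance the approximation ratio against the stronger benchmark $\mathcal{G}$ is forced to be at least $5/4$. Specifically, I take two bidders whose values are drawn i.i.d.\ from a standard exponential distribution. By the result of \citet{hartline2008optimal} already invoked in the previous proof, the Bayesian-optimal truthful and ex-post IR mechanism for two i.i.d.\ exponentials is uniform random allocation, whose expected surplus equals $1$. Thus it suffices to prove $\E_F[\mathcal{G}(\Vec v)] = 5/4$, since then every truthful mechanism $M$ satisfies $\E_F[\text{surplus}(M)] \leq 1 = \tfrac{4}{5}\E_F[\mathcal{G}(\Vec v)]$.

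To compute $\E_F[\mathcal{G}(\Vec v)]$, I use the characterization derived in the proof of Theorem~\ref{thm:G-guarantee}: taking $v_{(1)} \geq v_{(2)}$ to be the order statistics,
\[
\mathcal{G}(\Vec v) = \max\!\Big\{v_{(1)}-v_{(2)},\, \tfrac{v_1+v_2}{2}\Big\} = \tfrac{v_1+v_2}{2} + \tfrac{1}{2}\bigl(v_{(1)}-3v_{(2)}\bigr)_+.
\]
Since $\E[v_1] = \E[v_2] = 1$, the first summand contributes $1$ in expectation, so
\[
\E_F[\mathcal{G}(\Vec v)] = 1 + \tfrac{1}{2}\,\E\!\bigl[(v_{(1)}-3v_{(2)})_+\bigr].
\]

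Next, I evaluate $\E[(v_{(1)}-3v_{(2)})_+]$ using the classical decomposition for i.i.d.\ exponentials: $v_{(2)} \sim \mathrm{Exp}(2)$ and $W := v_{(1)} - v_{(2)} \sim \mathrm{Exp}(1)$ are independent. Substituting $v_{(1)}=W+v_{(2)}$ gives $v_{(1)}-3v_{(2)} = W - 2v_{(2)}$, and a direct integration (first inner-integrating over $v_{(2)} \in [0, W/2]$ against the $\mathrm{Exp}(2)$ density and then outer-integrating $W$ against the $\mathrm{Exp}(1)$ density) yields $\E[(W-2v_{(2)})_+] = \tfrac{1}{2}$. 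Hence $\E_F[\mathcal{G}] = 1 + \tfrac{1}{2}\cdot \tfrac{1}{2} = \tfrac{5}{4}$, and combining with the upper bound of $1$ on achievable expected surplus gives an approximation ratio of at least $5/4$ against $\mathcal{G}$.

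The only genuinely computational step is the closed form for $\E[(W-2v_{(2)})_+]$, which is routine once the order-statistic decomposition is in place; the main conceptual obstacle, namely identifying $\mathcal{G}$ pointwise as the maximum of the Vickrey surplus and the random-allocation surplus, has already been done in the proof of Theorem~\ref{thm:G-guarantee}. No additional lower-bound machinery on the achievable surplus is needed beyond citing the result of \cite{hartline2008optimal} on optimal Bayesian mechanisms for two i.i.d.\ exponential bidders that was already used in the prior theorem.
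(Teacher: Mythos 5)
Your proposal is correct and follows essentially the same route as the paper: same i.i.d.\ exponential instance, same appeal to the fact that the Bayesian-optimal surplus is $1$, and the same order-statistic decomposition $v_{(2)}\sim\mathrm{Exp}(2)$, $v_{(1)}-v_{(2)}\sim\mathrm{Exp}(1)$ independent, arriving at $\E[\mathcal{G}]=5/4$. The only difference is a cosmetic reorganization of the integration --- you rewrite $\max\{a,b\} = b + (a-b)_+$ and use linearity, whereas the paper conditions on the minimum and splits the inner integral at $x = 2v_2$ --- but the two compute the same integral.
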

\begin{proof}
    Our proof proceeds very similarly to the proof of Proposition 4.1 in \cite{hartline2008optimal} (essentially, correcting an arithmetic error that propagates therein).  Consider two agents drawing their values i.i.d. from a standard exponential distribution.  Again, the mechanism which maximizes the expected surplus for such a setting randomly allocates the item to one of the two agents and the expected surplus of such a mechanism is $1$.  On the other hand, the expected value of $\mathcal{G}(\Vec v)$ is \[\mathbb{E}[\mathcal{G}(\Vec v)] = \mathbb{E}\left[\max\left\{v_1 - v_2, \frac{v_1 + v_2}{2}\right\}\right].\]  Observe that $v_1 - v_2 \leq \frac{v_1 + v_2}{2}$ if and only if $v_1 \leq 3v_2$.

    Conditioning on the smaller value $v_2$ we have that $v_1 = v_2 + x$ for $x \geq 0$ with $x$ exponentially distributed.  We then observe that
    \begin{align*}
        \mathbb{E}[\mathcal{G}(v_1,v_2) | v_2] &= \int_0^{2v_2}\left(v_2 + \frac{x}{2}\right)e^{-x}dx + \int_{2v_2}^\infty xe^{-x}dx=\frac{1}{2} \cdot \left(2v_2 + 1 + e^{-2v_2}\right).
    \end{align*}
    Finally, we know that the minimum of two standard exponentially distributed variables is distributed like an exponential distribution with parameter $2$ so we may integrate out to obtain
    \begin{align*}
        \mathbb{E}[\mathcal{G}(v_1,v_2)] &= \int_0^{\infty}{2e^{-2x}\cdot \frac{1}{2} \cdot \left(2x + 1 + e^{-2x}\right)dx} =\frac{5}{4}.\qedhere
    \end{align*}
\end{proof}

\end{document}